\colorlet{dblue}{blue!40!black}
\spnewtheorem{convention}[theorem]{Convention}{\bfseries}{\itshape}
\spnewtheorem{notation}[definition]{Notation}{\bfseries}{\itshape}
\newcounter{environment}
\newcommand{\store}[4]{
  \if\relax\detokenize{#2}\relax
  \expandafter\def\csname store#3\endcsname{\begin{#1}\label{#3}#4\end{#1}}%
  \else
  \expandafter\def\csname store#3\endcsname{\begin{#1}[#2]\label{#3}#4\end{#1}}%
  \fi
  \csname store#3\endcsname%
}
\newcommand{\use}[1]{{
  \renewcommand{\label}[1]{}%
  \renewcommand{\qedappendix}{}%
  \renewcommand{\footnote}[1]{}%
  \setcounter{environment}{\arabic{theorem}}%
  \setcounterref{theorem}{#1}%
  \addtocounter{theorem}{-1}%
  \csname store#1\endcsname%
  \setcounter{lemma}{\arabic{environment}}%
}}
\newcommand{\pbpostrong}{PBPO$^{+}$\xspace}
\newcommand{\catname}[1]{{\normalfont\textbf{#1}}\xspace}
\newcommand{\Set}{\catname{Set}}
\newcommand{\Graph}{\catname{Graph}}
\newcommand{\GraphLattice}[1]{\normalfont\textbf{Graph}$^{#1}$\xspace}
\newcommand{\labels}{\mathcal{L}}
\newcommand{\mono}{\rightarrowtail}
\newcommand{\pbpostep}[1]{\Rightarrow_{#1}}
\tikzset{default/.style={
  thick,
  every node/.style={circle},
  level distance=12mm, 
  inner sep=.5mm}}
\tikzset{smallCircle/.style={circle,fill=black,inner sep=0mm,outer sep=1mm,minimum size=1mm}}
\tikzset{paint/.style={very thick,draw=#1!50!black,fill=#1,opacity=.4}}
\tikzset{paintopaque/.style={very thick,draw=#1!50!black!60,fill=#1!60}}
\tikzset{loopl/.style={out=140,in=210,looseness=10}}
\tikzset{loopr/.style={out=-40,in=30,looseness=10}}
\tikzset{loopb/.style={out=-40-90,in=30-90,looseness=10}}
\tikzset{loopt/.style={out=-40+90,in=30+90,looseness=10}}
\tikzset{loop/.style={out=-30+#1,in=30+#1,looseness=10}}
\tikzset{thinloop/.style={out=-20+#1,in=20+#1,looseness=20}}
\tikzset{emptystep/.style={-,dotted,line cap=round,dash pattern=on 0 off 3.00000}}
\tikzset{b/.style={anchor=north,at=(#1.south)}}
\tikzset{br/.style={anchor=north west,at=(#1.south east)}}
\tikzset{bl/.style={anchor=north east,at=(#1.south west)}}
\tikzset{bw/.style={anchor=north west,at=(#1.south west)}}
\tikzset{be/.style={anchor=north east,at=(#1.south east)}}
\tikzset{a/.style={anchor=south,at=(#1.north)}}
\tikzset{ar/.style={anchor=south west,at=(#1.north east)}}
\tikzset{al/.style={anchor=south east,at=(#1.north west)}}
\tikzset{aw/.style={anchor=south west,at=(#1.north west)}}
\tikzset{ae/.style={anchor=south east,at=(#1.north east)}}
\tikzset{r/.style={anchor=west,at=(#1.east)}}
\tikzset{l/.style={anchor=east,at=(#1.west)}}
\tikzset{rn/.style={anchor=north west,at=(#1.north east)}}
\tikzset{eta/.style={very thick,->,cblue!90!black}}
\tikzset{beta/.style={very thick,->,corange!80!white!90!black}}
\tikzset{devcirc/.style={circle,draw,fill=white,inner sep=0,minimum size=4\pgflinewidth}}
\tikzset{dev/.style={postaction={decorate},decoration={
  markings,
  mark=at position .5 with \node [devcirc] {};}}
}
\tikzset{medium tree/.style={
    level 1/.style={sibling distance=17mm},
    level 2/.style={sibling distance=9mm},
    level 3/.style={sibling distance=5mm},
    level 4/.style={sibling distance=4mm},
  }}
\tikzset{startAt/.style={inner sep=0mm,r=#1,xshift=-1.2mm,yshift=.8mm}}
\newcommand{\arrowTriangle}[2]{
  \pgfpathmoveto{\pgfpoint{-0.01#1+#2}{.6#1}}
  \pgfpathlineto{\pgfpoint{#1+#2}{0}}
  \pgfpathlineto{\pgfpoint{-0.01#1+#2}{-.6#1}}
  \pgfusepathqfill
}
\newdimen\prearrowsize
\newdimen\arrowsize
\newdimen\temparrowsize
\newcommand{\arrowscale}{5}
\newcommand{\setarrowsize}{
  \arrowsize=0.000000001pt
  \prearrowsize=\arrowscale\pgflinewidth
  \normalizearrowsize
}
\newcommand{\normalizearrowsize}{
  \ifdim\prearrowsize>2mm
    \addtolength{\arrowsize}{2mm}
    \addtolength{\prearrowsize}{-2mm}
    \temparrowsize=0.5\prearrowsize
    \prearrowsize=\temparrowsize
  \else
  \fi

  \addtolength{\arrowsize}{\prearrowsize}
}
  \arrowTriangle{\arrowsize}{-0.1\arrowsize}
  \arrowTriangle{\arrowsize}{-0.1\arrowsize}
  \arrowTriangle{\arrowsize}{-0.1\arrowsize+.8\arrowsize}
  \arrowTriangle{\arrowsize}{-0.1\arrowsize}
  \arrowTriangle{\arrowsize}{-0.1\arrowsize+.8\arrowsize}
  \arrowTriangle{\arrowsize}{-0.1\arrowsize+1.6\arrowsize}
\tikzstyle{gyellow}=[draw=black!80,top color=white!50,bottom color=black!20]
\tikzstyle{gblue}=[draw=blue!50,top color=white,bottom color=blue!60]
\tikzstyle{gred}=[draw=red!50,top color=white,bottom color=red!60]
\tikzstyle{ggreen}=[draw=blue!80!green!90!black,top color=white,bottom color=blue!80!green!60]
\tikzstyle{roundNode}=[gyellow,thick,circle,minimum size=4mm,inner sep=0.5mm]
\definecolor{cblue}{rgb}{0,0.4,0.7}
\definecolor{clighterblue}{rgb}{0,0.6,1.0}
\colorlet{cred}{red}
\colorlet{cgreen}{green!80!black}
\colorlet{corange}{orange!70!red}
\colorlet{cpureorange}{orange}
\colorlet{cpurple}{clighterblue!50!cred}
\colorlet{clightblue}{clighterblue!50!cblue!40}%
\colorlet{clightred}{cred!40}
\colorlet{clightgreen}{cgreen!80!cblue!40}
\colorlet{clightyellow}{corange!40!yellow!50}
\colorlet{clightorange}{cred!50!orange!40}
\colorlet{clightpurple}{clighterblue!50!cred!50}
\colorlet{cdarkred}{cred!70!black}
\colorlet{cdarkgreen}{cgreen!60!black}
\colorlet{cdarkblue}{cblue!60!black}
\colorlet{chighlight}{orange!50!yellow!60}
\tikzset{pgnode/.style={smallCircle,fill=white,draw=black,minimum size=1.3mm,outer sep=0.5mm}}
\tikzset{pgnodecolor/.style={pgnode,minimum size=4mm,scale=0.9}}
\tikzset{pgnodebig/.style={roundNode,gyellow,outer sep=1mm}}
\tikzset{pgrelation/.style={ultra thick,cblue!80!black,decorate,decoration={snake,amplitude=.4mm,segment length=4mm}}}
\tikzset{exi/.style={densely dotted}}
\tikzset{decweak/.style={-,green!50!black,very thick,opacity=0.7}}
\tikzset{decstrict/.style={->,orange,very thick,opacity=0.7}}
\tikzset{graphNode/.style={circle,draw=black,inner sep=.5mm,outer sep=1mm}}
\tikzset{sloop/.style={looseness=7}}
\tikzset{interconnect/.style={dotted,cred}}
\tikzset{match/.style={cdarkgreen,very thick}}
\tikzset{jigsaw/.style={circle,draw=black,minimum size=2mm,fill=white,minimum size=3.5mm}}
\colorlet{myblue}{blue!80!black}
\colorlet{mygreen}{cdarkgreen}
\colorlet{myred}{cred}
\colorlet{myorange}{corange}
\colorlet{mypurple}{blue!40!cred}
\tikzset{smalljigsaw/.style={rectangle,rounded corners=3mm,inner sep=1mm}}
\tikzset{epattern/.style={}}
\tikzset{eset/.style={draw=black!50}}
\tikzset{npattern/.style={rectangle,rounded corners=2mm,draw=black,inner sep=0.5mm,outer sep=.5mm,minimum size=4.5mm}}
\tikzset{nset/.style={npattern,draw=black!50,fill=white}}
\tikzset{label/.style={scale=0.85,inner sep=0,outer sep=0.5mm}}
\tikzset{short/.style={node distance=10mm}}
\newcommand{\annotate}[2]{
  \node at (#1.north east) [anchor=south west,xshift=-.75mm,yshift=-.75mm,label,outer sep=0mm] {#2};
}
\newcommand{\graphbox}[8]{
  \begin{scope}[xshift=#2,yshift=#3]
    \draw [rounded corners=2mm] (0,0) rectangle (#4,-#5);
    \node at (0,0mm) [anchor=north west,inner sep=1mm] {#1};
    \begin{scope}[xshift=#4/2+#6,yshift=#7] 
    #8
    \end{scope}
  \end{scope}
}
\newcommand{\transparentgraphbox}[8]{
{
  \transparent{0.5}
  \graphbox{#1}{#2}{#3}{#4}{#5}{#6}{#7}{#8}
  }
}
\newcommand{\vertex}[2]{%
  \begin{tikzpicture}[baseline=-1ex]%
    \node [rectangle,rounded corners=2mm,inner sep=0.5mm,fill=#2] {$#1$};%
  \end{tikzpicture}%
}
\newcommand{\rulescale}{0.9}
\newcommand*{\saveparinfo}[1]{%
  \expandafter\xdef\csname my@#1@parindent\endcsname{\the\parindent}%
  \expandafter\xdef\csname my@#1@parskip\endcsname{\the\parskip}%
}
\newcommand*{\useparinfo}[1]{%
  \parindent=\csname my@#1@parindent\endcsname \relax
  \parskip=\csname my@#1@parskip\endcsname \relax
}
\newcommand{\qedappendix}{\hfill \textcolor{blue}{$\circledast$}}
\newcommand{\PB}{{\normalfont PB}}
\newcommand{\PO}{{\normalfont PO}}
	\ifnum\value{page}=1 
\tikzset{graphNode/.style={circle,draw=black,inner sep=.5mm,outer sep=1mm}}
\begin{document}

\title{Graph Rewriting and Relabeling with \pbpostrong}

\author{Roy Overbeek \and J\"{o}rg Endrullis \and Alo\"{i}s Rosset}
\authorrunning{Roy Overbeek \and J\"{o}rg Endrullis \and Alo\"{i}s Rosset}

\institute{Vrije Universiteit Amsterdam, Amsterdam, The Netherlands \\
  \email{\{r.overbeek, j.endrullis, a.rosset\}@vu.nl}
}
\maketitle

\begin{abstract}%
We extend the powerful Pullback-Pushout (PBPO) approach for graph rewriting with strong matching.
Our approach, called \pbpostrong, exerts more control over the embedding of the pattern in the host graph, which is important for a large class of graph rewrite systems.
In addition, we show that \pbpostrong is well-suited for rewriting labeled graphs and certain classes of attributed graphs.
For this purpose, we employ a lattice structure on the label set and use order-preserving graph morphisms.
We argue that our approach is simpler and more general than related relabeling approaches in the literature.
\end{abstract}

\saveparinfo{parinfo}%

\section{Introduction}
\label{sec:introduction}

Injectively matching a graph pattern $P$ into a host graph $G$ induces a classification of $G$ into three parts: (i)~a \emph{match graph} $M$, the image of $P$; (ii)~a \emph{context graph} $C$, the largest subgraph disjoint from $M$; and (iii)~a \emph{patch} $J$, the set of edges that are in neither $M$ nor $C$.
For example, if $P$ and $G$ are
respectively
\begin{center}
    \begin{tikzpicture}[default,node distance=8mm,n/.style={graphNode}]
    \begin{scope}
    \node (3)[n] {};
    \node (4)[n] [right of=3] {};
    \node (5)[n] at ($(3)!.5!(4) + (0mm,8mm)$) {};
    \draw [->] (3) to node [below] {$b$} (4);
    \draw [->] (4) to node [right] {$a$} (5);
    \draw [->] (5) to node [above left] {$a$} (3);
    \end{scope}
    \node at (29.5mm,5mm) {and};
    \begin{scope}[xshift=50mm]
        \node (3)[n, match] {};
        \node (4)[n, match] [right of=3] {};
        \node (5)[n, match] at ($(3)!.5!(4) + (0mm,8mm)$) {};
        
        \node(6)[n] [right of=4] {};
        \node(7)[n] [above of=6] {};
        
        \draw [->, match] (3) to node [below] {$b$} (4);
        \draw [->, match] (4) to node [right] {$a$} (5);
        \draw [->, match] (5) to node [above left] {$a$} (3);
        \draw [->, myred, interconnect] (4) to node [below] {$b$} (6);
        \draw [->, myred, interconnect] (7) to node [above] {$a$} (5);
        \draw [->] (6) to node [left] {$b$} (7);
        \draw [->, myred, interconnect] (5) to[bend right=80] node [above left] {$c$} (3);
    \end{scope}
    \end{tikzpicture}
\end{center}
then $M$, $C$ and $J$ are indicated in green, black and red (and dotted), respectively.
We call this kind of classification a \emph{patch decomposition}.

Guided by the notion of patch decomposition, we recently introduced the expressive Patch Graph Rewriting (PGR) formalism~\cite{overbeek2020patch}. Like most graph rewriting formalisms, PGR rules specify a replacement of a left-hand side (lhs) pattern $L$ by a right-hand side (rhs) $R$. Unlike most rewriting formalisms, however, PGR rules allow one to (a)~constrain the permitted shapes of patches around a match for $L$, and (b)~specify how the permitted patches should be transformed, where transformations include rearrangement, deletion and duplication of patch edges. 

Whereas PGR is defined set-theoretically, in this paper we propose a more sophisticated categorical approach, inspired by the same ideas. Such an approach is valuable for at least three reasons: (i)~the classes of structures the method can be applied to is vastly generalized, (ii)~typical
meta-properties of interest (such as parallelism and concurrency) are more easily studied on the categorical level, and~(iii) it makes it easier to compare to existing categorical frameworks. 

The two main contributions of this paper are as follows.
First, we extend the Pullback Pushout (PBPO) approach by Corradini et al.~\cite{corradini2019pbpo} by strengthening the matching mechanism (Section~\ref{sec:pbpostrong}).
We call the resulting approach \emph{PBPO with strong matching}, or \emph{\pbpostrong} for short.
We argue that \pbpostrong is preferable over PBPO in situations where matching is nondeterministic, such as when specifying generative grammars or modeling execution.
Moreover, we show that in certain categories (including toposes), any PBPO rule can be modeled by a set of \pbpostrong rules (and even a single rule when matching is monic), while the converse does not hold (Section~\ref{sec:relating}).

Second, we show that \pbpostrong{} easily lends itself for rewriting labeled graphs and certain attributed graphs. To this end, we define a generalization of the usual category of labeled graphs, \GraphLattice{(\labels,\leq)}, in which the set of labels forms a complete lattice $(\labels,\leq)$ (Section~\ref{sec:category:graph:lattice}).
Not only does the combination of \pbpostrong and \GraphLattice{(\labels,\leq)} enable constraining and transforming the patch graph in flexible ways, it also provides natural support for modeling notions of relabeling, variables and sorts in rewrite rules. 
As we will clarify in the Discussion (Section~\ref{sec:discussion}), such mechanisms have typically been studied in the context of Double Pushout (DPO) rewriting~\cite{ehrig1973graph}, where the requirement to construct a pushout complement leads to technical complications and restrictions.

\section{Preliminaries}
\label{sec:preliminaries}

We assume familiarity with various basic categorical notions, notations and results, including morphisms $X \to Y$, pullbacks and pushouts, monomorphisms (monos) $X \mono Y$, identities $1_X : X \mono X$ and the pullback lemma~\cite{mac1971categories,awodey2006category}.

\newcommand{\src}{\mathit{s}}
\newcommand{\tgt}{\mathit{t}}
\newcommand{\lbl}{\ell}
\newcommand{\lblv}{\lbl^V}
\newcommand{\lble}{\lbl^E}

\begin{definition}[Graph Notions]
    A (labeled) \emph{graph} $G$ consists of a set of vertices $V$, a set of edges $E$, source and target functions $\src,\tgt : E \to V$, and label functions $\lblv : V \to \labels$ and $\lble : E \to \labels$ for some label set $\labels$.
    
    A graph is \emph{unlabeled} if $\labels$ is a singleton.
    
    A \emph{premorphism} between graphs $G$ and $G'$ is a pair of maps $\phi = (\phi_V : V_G \to V_{G'}, \phi_E : E_G \to E_{G'})$ satisfying
    $(s_{G'}, t_{G'}) \circ \phi_E = \phi_V \circ (s_G, t_G)$.
    
    A \emph{homomorphism} is a label-preserving premorphism $\phi$, i.e., a premorphism satisfying $\lblv_{G'} \circ \phi_V = \lblv_{G}$ and $\lble_{G'} \circ \phi_E = \lble_{G}$.
\end{definition}

\begin{definition}[Category $\Graph$~\cite{ehrig2006}]
    The category $\Graph$ has graphs as objects, parameterized over some global (and usually implicit) label set $\labels$, and homomorphisms as arrows. %
\end{definition}

Although we will point out similarities with PGR, an understanding of PGR is not required for understanding this paper. However, the following PGR terminology will prove useful (see also the opening paragraph of Section~\ref{sec:introduction}).

\begin{definition}[Patch Decomposition]
    Given a premorphism $x : X \to G$, we call the image $M = \mathit{im}(x)$ of $x$ the \emph{match graph} in $G$, $G - M$  the \emph{context graph} $C$ induced by $x$ (i.e., $C$ is the largest subgraph  disjoint from $M$), and the set of edges $E_G - E_M - E_C$ the set of \emph{patch edges} (or simply, \emph{patch}) induced by $x$. We refer to this decomposition induced by $x$ as a \emph{patch decomposition}. 
\end{definition}

\section{\pbpostrong}
\label{sec:pbpostrong}

We introduce \pbpostrong, which strengthens the matching mechanism of PBPO~\cite{corradini2019pbpo}. In the next section, we compare the two approaches and elaborate on the expressiveness of \pbpostrong.

\begin{definition}[\pbpostrong Rewrite Rule]\label{def:pbpostrong:rewrite:rule}
A \emph{\pbpostrong rewrite rule} $\rho$ is a collection of objects and morphisms, arranged as follows around a pullback square:
\begin{center}
  $\rho \ = \ $
  \begin{tikzpicture}[node distance=11mm,l/.style={inner sep=.5mm},baseline=-6mm]
    \node (L) {$L$};
    \node (K) [right of=L] {$K$}; \draw [->] (K) to node [above] {$l$} (L);
    \node (LP) [below of=L] {$L'$};
      \draw [>->] (L) to node [left] {$t_L$} (LP);
    \node (KP) [below of=K] {$K'$};
      \draw [>->] (K) to node [right] {$t_K$} (KP);
      \draw [->] (KP) to node [below] {$l'$} (LP);
       \node at ([shift={(-4mm,-4mm)}]K) {\PB};
    \node (R) [right of=K] {$R$}; 
      \draw [->] (K) to node [above] {$r$} (R);
  \end{tikzpicture}
\end{center}
$L$ is the \emph{lhs pattern} of the rule, $L'$ its \emph{type graph} and $t_L$ the \emph{typing} of $L$. Similarly for the \emph{interface} $K$. $R$ is the \emph{rhs pattern} or \emph{replacement for $L$}.
\end{definition}

\begin{remark}[A Mental Model for \Graph]
    In \Graph, $K'$ can be viewed as a collection of components, where every component is a (possibly generalized) subgraph of $L'$, as indicated by $l'$. By ``generalized'' we mean that the components may unfold loops and duplicate elements. $K$ is the restriction of $K'$ to those elements that are also in the image of $t_L$.
\end{remark}

We often depict the pushout $K' \xrightarrow{r'} R' \xleftarrow{t_R} R$ for span $K' \xleftarrow{t_K} K \xrightarrow{r} R$, because it shows the schematic effect of applying the rewrite rule. We reduce the opacity of $R'$ to emphasize that it is not part of the rule definition. 

\begin{example}[Rewrite Rule in \Graph]\label{ex:simple:rewrite:rule}
    A simple example of a rule for unlabeled graphs is the following:
    {%

\newcommand{\nodexa}{\vertex{x_1}{cblue!20}}
\newcommand{\nodexb}{\vertex{x_2}{cblue!20}}
\newcommand{\nodey}{\vertex{y}{cgreen!20}}
\newcommand{\nodez}{\vertex{z}{cred!10}}
\newcommand{\nodeu}{\vertex{u}{cpurple!25}}

\begin{center}
  \scalebox{\rulescale}{
  \begin{tikzpicture}[->,node distance=12mm,n/.style={}]
    \graphbox{$L$}{0mm}{0mm}{35mm}{10mm}{-4mm}{-5mm}{
      \node [npattern] (xaxb)
      {\nodexa \ \nodexb};
      \node [npattern] (y) [right of=xaxb] {\nodey};
      \draw [epattern] (xaxb) to node {} (y);
    }
    \graphbox{$K$}{36mm}{0mm}{35mm}{10mm}{-8mm}{-5mm}{
      \node [npattern] (xa) {\nodexa};
      \node [npattern] (xb) [right of=xa, short] {\nodexb};
      \node [npattern] (y) [right of=xb, short] {\nodey};
    }
    \graphbox{$R$}{72mm}{0mm}{42mm}{10mm}{-8mm}{-5mm}{
      \node [npattern] (xay)
      {\nodexa \ \nodey};
      \node [npattern] (xb) [right of=xay] {\nodexb};
      \draw [epattern,loop=0,looseness=3] (xb) to node {} (xb);
      
      \node [npattern] (u) [right of=xb] {\nodeu};
      
    }
    \graphbox{$L'$}{0mm}{-11mm}{35mm}{22mm}{-4mm}{-7mm}{
      \node [npattern] (xaxb)
      {\nodexa \ \nodexb};
      \node [npattern] (y) [right of=xaxb] {\nodey};
      \node [nset] (z) [below of=xaxb, short] {\nodez};
      
      \draw [epattern] (xaxb) to node {} (y);
      
      \draw [eset] (y) to [bend right=50] node {} (xaxb);
      \draw [eset] (z) to node {} (xaxb);
      \draw [eset,loop=180,looseness=3] (z) to node {} (z);
      \draw [eset] (y) to node {} (z);
    }
    \graphbox{$K'$}{36mm}{-11mm}{35mm}{22mm}{-8mm}{-7mm}{
      \node [npattern] (xa) {\nodexa};
      \node [npattern] (xb) [right of=xa, short] {\nodexb};
      \node [npattern] (y) [right of=xb, short] {\nodey};
      \node [nset] (z) [below of=xa, short] {\nodez};
      
      \draw [eset] (y) to [bend right=50] node {} (xb);
      \draw [eset,loop=180,looseness=3] (z) to node {} (z);
      
      \draw [eset] (z) to node {} (xa);
    }
    \transparentgraphbox{$R'$}{72mm}{-11mm}{42mm}{22mm}{-8mm}{-7mm}{
      \node [npattern] (xay)
      {\nodexa \ \nodey};
      \node [npattern] (xb) [right of=xay] {\nodexb};
      \draw [epattern,loop=0,looseness=3] (xb) to node {} (xb);
      \node [nset] (z) [below of=xay, short] {\nodez};
      
      \draw [eset] (xay) to [bend left=50] node {} (xb);
      \draw [eset,loop=180,looseness=3] (z) to node {} (z);
      
      \draw [eset] (z) to node {} (xay);
      \node [npattern] (u) [right of=xb] {\nodeu};
    }
  \end{tikzpicture}
  }
\end{center}

}%
    \noindent
    In this and subsequent examples, a vertex is a  non-empty set $\{ x_1,\ldots,x_n \}$ represented by a box
    {
    \hspace{-3.4mm}
    \newcommand{\nodexa}{\vertex{x_1}{cblue!20}}
    \newcommand{\nodexb}{\vertex{x_n}{cblue!20}}
    \begin{tikzcd}[->,node distance=12mm,n/.style={}]
    \node [npattern] (xaxb)
          {\nodexa \ \raisebox{1mm}{$\cdots$} \  \nodexb};
    \end{tikzcd}
    }, and each morphism $f = (\phi_V, \phi_E) : G \to G'$ is the unique morphism satisfying $S \subseteq f(S)$ for all $S \in V_G$. For instance, for $\{ x_1 \}, \{ x_2 \} \in V_K$,  $l(\{x_1\}) = l(\{ x_2 \}) = \{ x_1,x_2 \} \in V_L$. We will use examples that ensure uniqueness of 
    each $f$ (in particular, we ensure that $\phi_E$ is uniquely determined). Colors are purely supplementary.
\end{example}

\newcommand{\picturematch}{
  \begin{tikzpicture}[node distance=13mm,l/.style={inner sep=1mm},baseline=-7.5mm,v/.style={node distance=11.5mm}]
    \node (G) {$G_L$};
    \node (L) [left of=G] {$L$};
      \draw [->] (L) to node [above,l] {$m$} (G);
    \node (L2) [below of=L,v] {$L$};
      \draw [>->] (L) to node [left,l] {$1_L$} (L2);
      \node at ([shift={(4mm,-4mm)}]L) {PB};
    \node (LP) [below of=G,v] {$L'$};
      \draw [->] (L2) to node [above,l] {$t_L$} (LP);
      \draw [->] (G) to node [left,l] {$\alpha$} (LP);
  \end{tikzpicture}
}

\newcommand{\strongmatch}[3]{\mathrm{strong}(#1,#2,#3)}

\noindent%
\begin{minipage}[c]{.79\textwidth}%
\begin{definition}[Strong Match]\label{def:pbpostrong:match}
A \emph{match morphism} $m$ and an \emph{adherence morphism} $\alpha$
 form a strong match for a typing $t_L$, denoted $\strongmatch{t_L}{m}{\alpha}$, if the square on the right is a pullback square.
\end{definition}
\end{minipage}\hfill%
\begin{minipage}[c]{.18\textwidth}
    \hfill
    \picturematch{}%
\end{minipage}

\begin{remark}[Preimage Interpretation]
    \label{remark:preimage:interpretation}
    In \Set-like categories (such as \Graph), the match diagram states that the preimage of $t_L(L)$ under $\alpha : G_L \to L'$ is $L$ itself. So each element of $t_L(L)$ is the $\alpha$-image of exactly one element of $G_L$.
\end{remark}

In practice it is natural to first fix a match $m$, and to subsequently verify whether it can be extended into a suitable adherence morphism $\alpha$.

\newcommand{\picturerule}{
  \begin{tikzpicture}[node distance=11mm,l/.style={inner sep=.5mm},baseline=-6mm]
    \node (L) {$L$};
    \node (K) [right of=L] {$K$}; \draw [->] (K) to node [above] {$l$} (L);
    \node (LP) [below of=L] {$L'$};
      \draw [>->] (L) to node [left] {$t_L$} (LP);
    \node (KP) [below of=K] {$K'$};
      \draw [>->] (K) to node [right] {$t_K$} (KP);
      \draw [->] (KP) to node [below] {$l'$} (LP);
       \node at ([shift={(-4mm,-4mm)}]K) {\PB};
    \node (R) [right of=K] {$R$}; 
      \draw [->] (K) to node [above] {$r$} (R);
  \end{tikzpicture}
}

\newcommand{\picturestep}{
  \begin{tikzpicture}[node distance=16mm,l/.style={inner sep=1mm},baseline=-7.5mm,v/.style={node distance=11mm}]
    \node (G) {$G_L$};
    \node (L) [left of=G] {$L$};
      \draw [>->] (L) to node [above,l] {$m$} (G);
    \node (L2) [below of=L,v] {$L$};
      \draw [>->] (L) to node [left,l] {$1_L$} (L2);
      \node at ([shift={(4mm,-4mm)}]L) {\PB};
    \node (LP) [below of=G,v] {$L'$};
      \draw [>->] (L2) to node [below,l] {$t_L$} (LP);
      \draw [->] (G) to node [left,l] {$\alpha$} (LP);
     \node (GKP) [right of=G] {$G_K$};
     \draw [->] (GKP) to node [above,l] {$g_L$} (G);
     \node (KP) [below of=GKP,v] {$K'$};
     \draw [->] (GKP) to node [right,l] {$u'$} (KP);
     \draw [->] (KP) to node [below,l] {$l'$} (LP);
     \node at ([shift={(-4mm,-4mm)}]GKP) {\PB};
     \node (K) [above of=GKP,v] {$K$};
     \draw [>->, dotted] (K) to node [left,l] {$!u$} (GKP);
     \node (R) [right of=K] {$R$};
     \draw [->] (K) to node [above,l] {$r$} (R);
     \node (GP) [below of=R,v] {$G_R$};
     \draw [->] (GKP) to node [below,l,pos=0.7] {$g_R$} (GP);
     \draw [->] (R) to node [right,l] {$w$} (GP);
     \node at ([shift={(-4mm,4mm)}]GP) {\PO};

     \draw [>->] (K) to[out=-45,in=30,looseness=0.8] node[right,l,pos=0.85] {$t_K$} (KP);
  \end{tikzpicture}
}

\begin{definition}[\pbpostrong Rewrite Step] \label{def:pbpostrong:rewrite:step}
    A \pbpostrong rewrite rule $\rho$ (left)
    and adherence morphism $\alpha : G_L \to L'$ induce a rewrite step $G_L \Rightarrow_\rho^\alpha G_R$ on arbitrary $G_L$ and $G_R$ if the properties indicated by the commuting diagram (right)
    \begin{center}
      \raisebox{9mm}{$\rho \ = \ $\picturerule} \hspace{1cm} \picturestep
    \end{center}
    hold, where $u : K \to G_K$ is the unique (and necessarily monic) morphism satisfying $t_K = u' \circ u$. We write $G_L \Rightarrow_\rho G_R$ if $G_L \Rightarrow_\rho^\alpha G_R$ for some $\alpha$.
\end{definition}

It can be seen that the rewrite step diagram consists of a match square, a pullback square for extracting (and possibly duplicating) parts of $G_L$, and finally a pushout square for gluing these parts along pattern $R$.

The following lemma establishes the existence of a monic $u$ by constructing a witness, and Lemma~\ref{lem:uniqueness:u} establishes uniqueness.

\store{lemma}{Top-Left Pullback}{lem:topleft:pullback}{
    In the rewrite step diagram of Definition~\ref{def:pbpostrong:rewrite:step}, there exists a morphism $u : K \to G_K$ such that
    $L \xleftarrow{l} K \xrightarrow{u} G_K$ is a pullback for $L \xrightarrow{m} G_L \xleftarrow{g_L} G_K$,
    $t_K = u' \circ u$, and
    $u$ is monic. \qedappendix%
    \footnote{We use $\textcolor{blue}{\circledast}$ instead of $\qed$ when the proof is available in the Appendix.}
}

\store{lemma}{Uniqueness of $u$}{lem:uniqueness:u}{
    In the rewrite step diagram of Definition~\ref{def:pbpostrong:rewrite:step} (and in any category), there is a unique $v : K \to G_K$ such that $t_K = u' \circ v$. \qedappendix
}

\store{lemma}{Bottom-Right Pushout}{lem:bottomright:pushout}{
    Let $K' \xrightarrow{r'} R' \xleftarrow{t_R} R$ be a pushout for cospan $R \xleftarrow{r} K \xrightarrow{t_K} K'$ of rule $\rho$ in Definition~\ref{def:pbpostrong:rewrite:step}. Then in the rewrite step diagram, there exists a morphism $w' : G_R \to R'$ such that
    $t_R = w' \circ w$, and
    $K' \xrightarrow{r'} R' \xleftarrow{w'} G_R$ is a pushout for $K' \xleftarrow{u'} G_K \xrightarrow{g_R} G_R$. \qedappendix
}

Lemmas~\ref{lem:topleft:pullback} and \ref{lem:bottomright:pushout} show that a \pbpostrong step defines a commuting diagram similar to the PBPO definition (Definition~\ref{def:pbpo:rewrite:step}):
\begin{center}
\vspace{-1ex}
\begin{tikzpicture}[->,l/.style={label,inner sep=1mm},erase/.style={-,line width=1mm,white}]
    \begin{scope}[node distance=3.2cm]
      \node[opacity=0.4] (L) {$L$};
      \node (K) [right of=L] {$K$};
      \node (R) [right of=K] {$R$};
      \draw[opacity=0.4] (K) to node [above,l] {$l$} (L);
      \draw (K) to node [above,l] {$r$} (R);
    \end{scope}
    \begin{scope}[node distance=.8cm]
      \node (GL) [below of=L] {$G_L$};
      \node (GK) [below of=K] {$G_K$};
      \node (GR) [below of=R] {$G_R$};
      \draw (GK) to node [l,fill=white] {$g_L$} (GL);
      \draw (GK) to node [l,fill=white] {$g_R$} (GR);
      \draw [>->, opacity=0.5] (L) to node [right,l] {$m$} (GL);
      \draw [>->, densely dotted] (K) to node [right,l] {$!u$} (GK);
      \draw (R) to node [right,l] {$w$} (GR);
    \end{scope}
    \begin{scope}[node distance=.8cm]
      \node (L') [below of=GL] {$L'$};
      \node (K') [below of=GK] {$K'$};
      \node[opacity=0.4] (R') [below of=GR] {$R'$};
      \draw (K') to node [below,l] {$l'$} (L');
      \draw[opacity=0.4] (K') to node [below,l] {$r'$} (R');
      \draw (GL) to node [right,l] {$\alpha$} (L');
      \draw (GK) to node [right,l] {$u'$} (K');
      \draw[opacity=0.5] (GR) to node [right,l] {$w'$} (R');
    \end{scope}
    \begin{scope}[twist/.style={bend left=70}]
      \draw [erase](L) to[twist] (L');
      \draw [>->, opacity=0.5] (L) to[twist] node [right,pos=0.7] {$t_L$} (L');
      \draw [erase] (K) to[twist] (K');
      \draw [>->] (K) to[twist] node [right,pos=0.7] {$t_K$} (K');
      \draw [erase] (R) to[twist] (R');
      \draw [opacity=0.4] (R) to[twist] node [right,pos=0.7] {$t_R$} (R');
    \end{scope}
    \node[opacity=0.4] at ($(L)!.5!(GK)$) {PB};
    \node at ($(L')!.5!(GK)$) {PB};
    \node at ($(R)!.5!(GK)$) {PO};
    \node[opacity=0.4] at ($(R')!.5!(GK)$) {PO};
    \begin{scope}[node distance=1.2cm]
      \node (L1) [left of=GL] {$L$};
      \node (L2) [left of=L'] {$L$};
      \draw [>->] (L1) to node [l,above] {$m$} (GL);
      \draw [>->] (L1) to node [l,left] {$1_L$} (L2);
      \draw [>->] (L2) to node [l,below] {$t_L$} (L');
    \node at ($(L1)!.5!(L')$) {PB};
    \end{scope}
  \end{tikzpicture}
\vspace{-1ex}
\end{center}
We will omit the match diagram in depictions of steps.
\pagebreak

\begin{example}[Rewrite Step]
\label{ex:simple:rewrite:step}
Applying the rule given in Example~\ref{ex:simple:rewrite:rule} to $G_L$ (as depicted below) has the following effect:
{%

\newcommand{\nodexa}{\vertex{x_1}{cblue!20}}%
\newcommand{\nodexb}{\vertex{x_2}{cblue!20}}%
\newcommand{\nodey}{\vertex{y}{cgreen!20}}%
\newcommand{\nodez}{\vertex{z}{cred!10}}%
\newcommand{\nodeza}{\vertex{z_1}{cred!10}}%
\newcommand{\nodezb}{\vertex{z_2}{cred!10}}%
\newcommand{\nodezc}{\vertex{z_3}{cred!10}}%
\newcommand{\nodeu}{\vertex{u}{cpurple!25}}%
\begin{center}\vspace{-.25ex}
  \scalebox{\rulescale}{
  \begin{tikzpicture}[->,node distance=12mm,n/.style={}]
      \graphbox{$L$}{0mm}{0mm}{30mm}{10mm}{-2mm}{-5mm}{
      \node [npattern] (xaxb)
      {\nodexa \ \nodexb};
      \node [npattern] (y) [right of=xaxb] {\nodey};
      \draw [epattern] (xaxb) to node {} (y);
    }
    \graphbox{$G_L$}{0mm}{-11mm}{30mm}{24mm}{-2mm}{-9mm}{
      \node [npattern] (xaxb)
      {\nodexa \ \nodexb};
      \node [npattern] (y) [right of=xaxb] {\nodey};
      \draw [epattern] (xaxb) to node {} (y);
      
      \node [npattern] (za) [below of=xaxb, short] {\nodeza};
      \node [npattern] (zb) [right of=za] {\nodezb};
      \node [npattern] (zc) [left of=za,node distance=7mm] {\nodezc};
      
      \draw [epattern] (za) to node {} (xaxb);
      \draw [epattern] (zb) to node {} (za);
      \draw [epattern] (zb) to node {} (xaxb);
      \draw [epattern] (za) to [bend right=20] node {} (zb);
      
      \draw [epattern] (y) to [bend right=50] node {} (xaxb);
      \draw [epattern] (y) to [bend right=80] node {} (xaxb);
      \draw [epattern] (y) to (zb);
    }
    \graphbox{$G_K$}{31mm}{-11mm}{38mm}{24mm}{-2mm}{-9mm}{
      \node [npattern, xshift=3mm] (xaxb)
      {\nodexb};
      \node [npattern] (xa) [left of=xaxb]
      {\nodexa};
      \node [npattern] (y) [right of=xaxb] {\nodey};
      
      \node [npattern] (za) [below of=xaxb, short] {\nodeza};
      \node [npattern] (zb) [right of=za] {\nodezb};
      \node [npattern] (zc) [left of=za,node distance=10mm] {\nodezc};
      
      \draw [epattern] (zb) to [bend left=20] node {} (za);
      \draw [epattern] (za) to node {} (zb);
      
      \draw [epattern] (y) to [bend right=50] node {} (xaxb);
      \draw [epattern] (y) to [bend right=80] node {} (xaxb);
      \draw [epattern] (za) to node {} (xa);
      \draw [epattern] (zb) to node {} (xa);
      
    }
    \graphbox{$K$}{31mm}{0mm}{38mm}{10mm}{-8mm}{-5mm}{
      \node [npattern] (xa) {\nodexa};
      \node [npattern] (xb) [right of=xa, short] {\nodexb};
      \node [npattern] (y) [right of=xb, short] {\nodey};
    }
    
    \graphbox{$R$}{70mm}{0mm}{42mm}{10mm}{-8mm}{-5mm}{
      \node [npattern] (xay)
      {\nodexa \ \nodey};
      \node [npattern] (xb) [right of=xay] {\nodexb};
      \draw [epattern,loop=0,looseness=3] (xb) to node {} (xb);
      
      \node [npattern] (u) [right of=xb] {\nodeu};
    }
    
    \graphbox{$G_R$}{70mm}{-11mm}{42mm}{24mm}{-8mm}{-9mm}{
      \node [npattern] (xaxb)
      {\nodexa \ \nodey};
      \node [npattern] (y) [right of=xaxb] {\nodexb};
      
      \node [npattern] (za) [below of=xaxb, short] {\nodeza};
      \node [npattern] (zb) [right of=za] {\nodezb};
      \node [npattern] (zc) [left of=za,node distance=7mm] {\nodezc};

      \node [npattern] (u) [right of=y] {\nodeu};
      
      \draw [epattern] (zb) to [bend left=20] node {} (za);
      \draw [epattern] (za) to node {} (zb);
      
      \draw [epattern] (xaxb) to [bend left=50] node {} (y);
      \draw [epattern] (xaxb) to [bend left=80] node {} (y);
      \draw [epattern,loop=0,looseness=3] (y) to node {} (y);
      
      \draw [epattern] (za) to node {} (xaxb);
      \draw [epattern] (zb) to node {} (xaxb);
    }
    \graphbox{$L'$}{0mm}{-36mm}{30mm}{23mm}{-2mm}{-7mm}{
      \node [npattern] (xaxb)
      {\nodexa \ \nodexb};
      \node [npattern] (y) [right of=xaxb] {\nodey};
      \node [nset] (z) [below of=xaxb, short]
      {\nodeza \ \nodezb \ \nodezc};
      
      \draw [epattern] (xaxb) to node {} (y);
      
      \draw [eset] (y) to [bend right=50] node {} (xaxb);
      \draw [eset] (z) to node {} (xaxb);
      \draw [eset,loop=180,looseness=3] (z) to node {} (z);
      
      \draw[eset] (y) to node {} (z);
    }
    \graphbox{$K'$}{31mm}{-36mm}{38mm}{23mm}{-8mm}{-7mm}{
      \node [npattern] (xa) {\nodexa };
      \node [npattern] (xb) [right of=xa, short] {\nodexb};
      \node [npattern] (y) [right of=xb, short] {\nodey};
      \node [nset] (z) [below of=xa,short, xshift=3mm] {\nodeza \ \nodezb \ \nodezc};
      
      \draw [eset] (y) to [bend right=50] node {} (xb);
      \draw [eset,loop=180,looseness=3] (z) to node {} (z);
      \draw [eset] (z) to node {} (xa);
    }
    \transparentgraphbox{$R'$}{70mm}{-36mm}{42mm}{23mm}{-8mm}{-7mm}{
      \node [npattern] (xay)
      {\nodexa \ \nodey};
      \node [npattern] (xb) [right of=xay] {\nodexb};
      \draw [epattern,loop=0,looseness=3] (xb) to node {} (xb);
      \node [nset] (z) [below of=xay, short] {\nodeza \ \nodezb \ \nodezc};
      
      \draw [eset] (xay) to [bend left=50] node {} (xb);
      \draw [eset,loop=180,looseness=3] (z) to node {} (z);
      
      \draw [eset] (z) to node {} (xay);
      \node [npattern] (u) [right of=xb] {\nodeu};
    }
  \end{tikzpicture}
  }\vspace{-1ex}
\end{center}

}%
This example illustrates (i)~how permitted patches can be constrained (e.g., $L'$ forbids patch edges targeting $y$), (ii)~how patch edge endpoints that lie in the image of $t_L$ can be redefined, and (iii)~how patch edges can be deleted.
\end{example}

In the examples of this section, we have restricted our attention to unlabeled graphs. In Section~\ref{sec:category:graph:lattice}, we show that the category \GraphLattice{(\labels,\leq)} is more suitable than \Graph for rewriting labeled graphs using \pbpostrong{}.

\section{Expressiveness of \pbpostrong}
\label{sec:relating}

The set of \pbpostrong rules is a strict subset of the set of PBPO rules, and for any \pbpostrong rule $\rho$, we have ${\Rightarrow_\rho^{\mathrm{PBPO}^{+}}} \subseteq {\Rightarrow_\rho^{\mathrm{PBPO}}}$ for the generated rewrite relations. Nevertheless, we will show that under certain assumptions, any PBPO rule can be modeled by a set of \pbpostrong rules, but not vice versa. Thus, in many categories of interest (such as toposes), \pbpostrong can define strictly more expressive grammars than PBPO.
This result may be likened to Habel et al.'s result that restricting DPO to monic matching increases expressive power~\cite{habel2001doublerevisited}.

In Section~\ref{sec:relating:rule:match:step}, we recall and compare the PBPO definitions for rule, match and step, clarifying why \pbpostrong is shorthand for \emph{PBPO with strong matching}. We then argue why strong matching is usually desirable in Section~\ref{sec:relating:case:for:strong:matching}. Finally, we prove a number of novel results on PBPO in Section~\ref{sec:relating:simulating}, relating to monic matching, monic rules and strong matching. Our claim about \pbpostrong's expressiveness follows as a consequence.

\subsection{PBPO: Rule, Match \& Step}
\label{sec:relating:rule:match:step}

\smallskip

\noindent
\begin{minipage}{0.68\textwidth}
    \begin{definition}[PBPO Rule~\cite{corradini2019pbpo}]
        \label{def:pbpo:rule}
        A \emph{PBPO rule} $\rho$ is a commutative diagram as shown on the right.
        The bottom span can be regarded as a typing for the top span. The rule is in \emph{canonical form} if the left square is a pullback and the right square is a pushout.
    \end{definition}
\end{minipage}\hfill%
\begin{minipage}{0.25\textwidth}
    \begin{tikzcd}[column sep=5mm,row sep=5mm]
    L \arrow[d, "t_L"']         & K \arrow[l, "l"'] \arrow[r, "r"] \arrow[d, "t_K"]            & R \arrow[d, "t_R"] \\
    L' \arrow[ru, "=", phantom] & K' \arrow[l, "l'"] \arrow[r, "r'"'] \arrow[ru, "=", phantom] & R'                
    \end{tikzcd}
\end{minipage}
\medskip

Every PBPO rule is equivalent to a rule in canonical form~\cite{corradini2019pbpo}, and in \pbpostrong, rules are limited to those in canonical form. The only important difference between a canonical PBPO rule and a \pbpostrong rule, then, is that a \pbpostrong rule requires monicity of $t_L$ (and hence also of $t_K$).

\begin{definition}[PBPO Match~\cite{corradini2019pbpo}]
A \emph{PBPO match} for a typing $t_L : L \to L'$ is a pair of morphisms $(m : L \to G,\alpha : G \to L')$ such that $t_L = \alpha \circ m$.
\end{definition}

The pullback construction used to establish a match in \pbpostrong implies $t_L = \alpha \circ m$. Thus PBPO matches are more general than the strong match used in \pbpostrong (Definition~\ref{def:pbpostrong:match}). More specifically for \Graph, PBPO allows mapping elements of the host graph $G_L$ not in the image of $m : L \to G_L$ onto the image of $t_L$, whereas \pbpostrong forbids this. In the next subsection, we will argue why it is often desirable to forbid such mappings.

\bigskip

\noindent
\begin{minipage}{0.57\textwidth}
    \begin{definition}[PBPO Rewrite Step~\cite{corradini2019pbpo}]\label{def:pbpo:rewrite:step}
        A PBPO rule $\rho$ (as in Definition~\ref{def:pbpo:rule}) induces a \emph{PBPO step} $G_L \Rightarrow_\rho^{m, \alpha} G_R$
        shown on the right, 
        where (i)~$u : K \to G_K$ is uniquely determined by the universal property of pullbacks and makes the top-left square commuting, 
        (ii)~$w' : G_R \to R'$\parfillskip=0pt
    \end{definition}
\end{minipage}\hfill%
\begin{minipage}{0.42\textwidth}
      \begin{tikzpicture}[->,l/.style={label,inner sep=1mm}]
        \begin{scope}[node distance=2cm]
          \node (L) {$L$};
          \node (K) [right of=L] {$K$};
          \node (R) [right of=K] {$R$};
          \draw (K) to node [above,l] {$l$} (L);
          \draw (K) to node [above,l] {$r$} (R);
        \end{scope}
        \begin{scope}[node distance=0.9cm]
          \node (GL) [below of=L] {$G_L$};
          \node (GK) [below of=K] {$G_K$};
          \node (GR) [below of=R] {$G_R$};
          \draw (GK) to node [l,fill=white] {$g_L$} (GL);
          \draw (GK) to node [l,fill=white] {$g_R$} (GR);
          \draw (L) to node [right,l] {$m$} (GL);
          \draw (K) to node [right,l] {$u$} (GK);
          \draw (R) to node [right,l] {$w$} (GR);
        \end{scope}
        \begin{scope}[node distance=0.9cm]
          \node (L') [below of=GL] {$L'$};
          \node (K') [below of=GK] {$K'$};
          \node (R') [below of=GR] {$R'$};
          \draw (K') to node [below,l] {$l'$} (L');
          \draw (K') to node [below,l] {$r'$} (R');
          \draw (GL) to node [right,l] {$\alpha$} (L');
          \draw (GK) to node [right,l] {$u'$} (K');
          \draw (GR) to node [right,l] {$w'$} (R');
        \end{scope}
        \begin{scope}[twist/.style={bend right=70},erase/.style={-,line width=1mm,white}]
          \draw (L) to[twist] node [right,pos=0.7] {$t_L$} (L');
          \draw [erase] (K) to[twist] (K');
          \draw (K) to[twist] node [right,pos=0.7] {$t_K$} (K');
          \draw [erase] (R) to[twist] (R');
          \draw (R) to[twist] node [right,pos=0.7] {$t_R$} (R');
        \end{scope}
        \node at ($(L)!.5!(GK) + (-2mm,0mm)$) {$=$};
        \node at ($(L')!.5!(GK) + (-2mm,0mm)$) {PB};
        \node at ($(R)!.5!(GK) + (-2mm,0mm)$) {PO};
        \node at ($(R')!.5!(GK) + (-2mm,0mm)$) {$=$};
      \end{tikzpicture}\vspace{2.2ex}
\end{minipage}\\
{\itshape is uniquely determined by the universal property of pushouts and makes the bottom-right square commuting, and
$t_L = \alpha \circ m$.

We write 
$G_L \pbpostep{\rho} G_R$ if
$G_L \Rightarrow_\rho^{m, \alpha} G_R$ for some $m$ and $\alpha$.
}
\medskip

The match square of \pbpostrong allows simplifying the characterization of $u$, as shown in the proof to Lemma~\ref{lem:uniqueness:u}. This simplification is not possible for PBPO (see Remark~\ref{remark:pbpo:u:not:unique}). The bottom-right square is omitted in the definition of a \pbpostrong rewrite step, but can be reconstructed through a pushout (modulo isomorphism). So this difference is not essential.

\begin{figure}[b!]
    \centering
    {%

\newcommand{\nodexa}{\vertex{x_1}{cblue!20}}
\newcommand{\nodexb}{\vertex{x_2}{cblue!20}}
\newcommand{\nodexc}{\vertex{x_3}{cblue!20}}
\newcommand{\nodexd}{\vertex{x_4}{cblue!20}}

\newcommand{\nodea}{\vertex{a}{cgreen!20}}
\newcommand{\nodeb}{\vertex{b}{cpurple!25}}

\newcommand{\nodeaa}{\vertex{a_1}{cgreen!20}}
\newcommand{\nodeab}{\vertex{a_2}{cgreen!20}}
\newcommand{\nodeba}{\vertex{b_1}{cpurple!25}}
\newcommand{\nodebb}{\vertex{b_2}{cpurple!25}}

\newcommand{\nodex}{\vertex{x}{cblue!20}}

  \scalebox{\rulescale}{
  \begin{tikzpicture}[->,node distance=12mm,n/.style={}]
    \graphbox{$R$}{68mm}{0mm}{40mm}{8mm}{-8mm}{-4mm}{
      \node [npattern] (xa)
      {\nodexa};
      \node [npattern] (xb) [right of=xa, short,xshift=-2.5mm]
      {\nodexb};
    }
    \graphbox{$G_L$}{0mm}{-9mm}{26mm}{9mm}{-1mm}{-4.8mm}{
      \node [npattern] (a)
      {\nodea};
      \node [npattern] (b) [right of=a, short, xshift=-2.5mm] {\nodeb};
    }
      \graphbox{$L$}{0mm}{0mm}{26mm}{8mm}{-1mm}{-4mm}{
      \node [npattern] (x)
      {\nodex};
       \draw [epattern, dotted] (x) to node {} (a);
    }
    \graphbox{$G_K$}{27mm}{-9mm}{40mm}{9mm}{-8mm}{-4.8mm}{
      \node [npattern] (aa)
      {\nodeaa};
      \node [npattern] (ab) [right of=aa, short, xshift=-2.5mm] {\nodeab};
      \node [npattern] (ba) [right of=ab, short, xshift=-2.5mm]
      {\nodeba};
      \node [npattern] (bb) [right of=ba, short, xshift=-2.5mm] {\nodebb};
    }
    \graphbox{$K$}{27mm}{0mm}{40mm}{8mm}{-8mm}{-4mm}{
      \node [npattern] (xa)
      {\nodexa};
      \node [npattern] (xb) [right of=xa, short, xshift=-2.5mm]
      {\nodexb};
      \draw [epattern, dotted] (xa) to node {} (aa);
      \draw [epattern, dotted] (xb) to node {} (ab);
    }
    \graphbox{$G_R$}{68mm}{-9mm}{40mm}{9mm}{-8mm}{-4.8mm}{
      \node [npattern] (aa)
      {\nodeaa};
      \node [npattern] (ab) [right of=aa, short, xshift=-2.5mm] {\nodeab};
      \node [npattern] (ba) [right of=ab, short, xshift=-2.5mm]
      {\nodeba};
      \node [npattern] (bb) [right of=ba, short, xshift=-2.5mm] {\nodebb};
    }
    \graphbox{$L'$}{0mm}{-19mm}{26mm}{8mm}{-1mm}{-4mm}{
      \node [npattern] (x)
      {\nodex};
    }
    \graphbox{$K'$}{27mm}{-19mm}{40mm}{8mm}{-8mm}{-4mm}{
      \node [npattern] (xa)
      {\nodexa};
      \node [npattern] (xb) [right of=xa, short, xshift=-2.5mm]
      {\nodexb};
    }
    \transparentgraphbox{$R'$}{68mm}{-19mm}{40mm}{8mm}{-8mm}{-4mm}{
      \node [npattern] (xa)
      {\nodexa};
      \node [npattern] (xb) [right of=xa, short, xshift=-2.5mm]
      {\nodexb};
    }
  \end{tikzpicture}
  }

}%
    \caption{Failure of Lemma~\ref{lem:uniqueness:u} for PBPO.}
    \label{fig:fail:pbpo}
\end{figure}
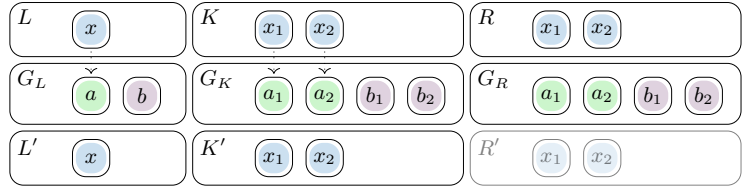

\begin{remark}
\label{remark:pbpo:u:not:unique}
In a PBPO rewrite step, not every morphism $u : K \to G_K$ satisfying $u' \circ u = t_K$ corresponds to the arrow uniquely determined by the top-left pullback. Thus Lemma~\ref{lem:uniqueness:u} does not hold for PBPO.
This can be seen in the example of a (canonical) PBPO rewrite rule and step depicted in Figure~\ref{fig:fail:pbpo}.
Because our previous notational convention breaks for this example, we indicate two morphisms by dotted arrows. The others can be inferred.

Morphism $u : K \to G_K$ (as determined by the top-left pullback) is indicated, but it can be seen that three other morphisms $v : K \to G_K$ satisfy $u' \circ v = t_K$, because every $x \in V_{K'}$ has two elements in its preimage in $G_{K}$.
\end{remark}

\begin{remark}[\pbpostrong and AGREE]
AGREE~\cite{corradini2015agree} by Corradini et al.\ is a rewriting approach closely related to PBPO. AGREE's match square can be regarded as a specialization of \pbpostrong's match square, since AGREE fixes the type morphism $t_L : L \mono L'$ of a rule as the partial map classifier for $L$. Thus, \pbpostrong can also be regarded as combining PBPO's rewriting mechanism with a generalization of AGREE's strong matching mechanism.
\end{remark}

\subsection{The Case for Strong Matching}
\label{sec:relating:case:for:strong:matching}

The two following examples serve to illustrate why we find it necessary to strengthen the matching criterion when matching is nondeterministic.

\begin{example}
    \label{ex:pbpo:remove:loop}
    In \pbpostrong, an application of the rule
    {%
\newcommand{\nodexa}{\vertex{x_1}{cblue!20}}%
\newcommand{\nodexb}{\vertex{x_2}{cblue!20}}%
\newcommand{\nodey}{\vertex{y}{cgreen!20}}%
\newcommand{\nodez}{\vertex{z}{cred!10}}%
\newcommand{\nodeu}{\vertex{u}{cpurple!25}}%
\newcommand{\nodex}{\vertex{x}{cblue!20}}%
\begin{center}\vspace{0ex}
  \scalebox{\rulescale}{
  \begin{tikzpicture}[->,node distance=12mm,n/.style={}]
    \graphbox{$L$}{0mm}{0mm}{35mm}{8mm}{-4mm}{-4mm}{
      \node [npattern] (x)
      {\nodex};
      \draw [epattern,loop=180,looseness=3] (x) to node {} (x);
    }
    \graphbox{$K$}{36mm}{0mm}{35mm}{8mm}{-4mm}{-4mm}{
      \node [npattern] (x)
      {\nodex};
    }
    \graphbox{$R$}{72mm}{0mm}{35mm}{8mm}{-8mm}{-4mm}{
      \node [npattern] (x)
      {\nodex};
    }
    \graphbox{$L'$}{0mm}{-9mm}{35mm}{8mm}{-4mm}{-4mm}{
      \node [npattern] (x)
      {\nodex};
      \draw [epattern,loop=180,looseness=3] (x) to node {} (x);
      \node [npattern] (y) [right of=x] {\nodey};
      \draw [epattern,loop=180,looseness=3] (y) to node {} (y);
    }
    \graphbox{$K'$}{36mm}{-9mm}{35mm}{8mm}{-4mm}{-4mm}{
      \node [npattern] (x)
      {\nodex};
      \node [npattern] (y) [right of=x] {\nodey};
      \draw [epattern,loop=180,looseness=3] (y) to node {} (y);
    }
    \transparentgraphbox{$R'$}{72mm}{-9mm}{35mm}{8mm}{-8mm}{-4mm}{
      \node [npattern] (x)
      {\nodex};
      \node [npattern] (y) [right of=x] {\nodey};
      \draw [epattern,loop=180,looseness=3] (y) to node {} (y);
    }
  \end{tikzpicture}
  }\vspace{0ex}
\end{center}%
}
    \noindent in an unlabeled graph $G_L$ removes a loop from an isolated vertex that has a single loop, and preserves everything else. In PBPO, a match is allowed to map all of $G_L$ into the component determined by vertex $\{ x \}$, so that the rule deletes all of $G_L$'s edges at once.
    (Before studying the next example, the reader is invited to consider what the effect of the PBPO rule is if $R$ and $R'$ are replaced by $L$ and $L'$, respectively.)
\end{example}

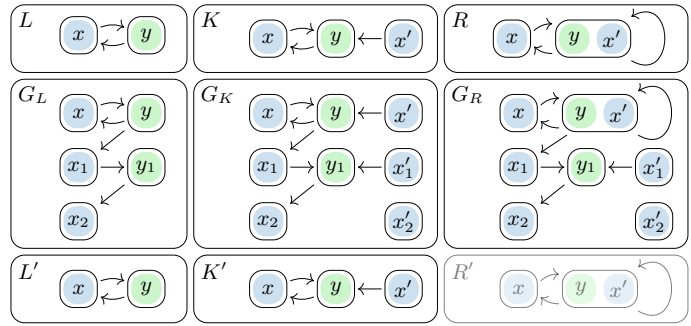
\begin{figure}[b!]
    \centering
    {%

\newcommand{\nodex}{\vertex{x}{cblue!20}}
\newcommand{\nodexa}{\vertex{x_1}{cblue!20}}
\newcommand{\nodexb}{\vertex{x_2}{cblue!20}}
\newcommand{\nodexp}{\vertex{x'}{cblue!20}}
\newcommand{\nodexap}{\vertex{x_1'}{cblue!20}}
\newcommand{\nodexbp}{\vertex{x_2'}{cblue!20}}
\newcommand{\nodey}{\vertex{y}{cgreen!20}}
\newcommand{\nodeya}{\vertex{y_1}{cgreen!20}}
\newcommand{\nodez}{\vertex{z}{cred!10}}
\newcommand{\nodeza}{\vertex{z_1}{cred!10}}
\newcommand{\nodezb}{\vertex{z_2}{cred!10}}
\newcommand{\nodezc}{\vertex{z_3}{cred!10}}
\newcommand{\nodeu}{\vertex{u}{cpurple!25}}

  \scalebox{\rulescale}{
  \begin{tikzpicture}[->,node distance=12mm,n/.style={}]
      \graphbox{$L$}{0mm}{0mm}{26mm}{10mm}{-3mm}{-4.5mm}{
      \node [npattern] (x)
      {\nodex};
      \node [npattern] (y) [right of=x, short] {\nodey};
      \draw [epattern] (x) to[bend left=20] node {} (y);
      \draw [epattern] (y) to[bend left=20] node {} (x);
    }
    \graphbox{$G_L$}{0mm}{-11mm}{26mm}{25mm}{-3mm}{-5mm}{
      \node [npattern] (x) {\nodex};
      \node [npattern] (y) [right of=x, short] {\nodey};
      \draw [epattern] (x) to[bend left=20] node {} (y);
      \draw [epattern] (y) to[bend left=20] node {} (x);

      \node [npattern] (xa) [below of=x, short, yshift=2mm] {\nodexa};
      \node [npattern] (ya) [right of=xa, short] {\nodeya};
      \node [npattern] (xb) [below of=xa, short, yshift=2mm] {\nodexb};
      
      \draw [epattern] (y) to (xa);
      \draw [epattern] (xa) to (ya);
      \draw [epattern] (ya) to (xb);
    }
    \graphbox{$G_K$}{27mm}{-11mm}{36mm}{25mm}{-7mm}{-5mm}{
      \node [npattern] (x)
      {\nodex};
      \node [npattern] (y) [right of=x, short] {\nodey};
      \draw [epattern] (x) to[bend left=20] node {} (y);
      \draw [epattern] (y) to[bend left=20] node {} (x);

      \node [npattern] (xa) [below of=x, short, yshift=2mm] {\nodexa};
      \node [npattern] (ya) [right of=xa, short] {\nodeya};
      \node [npattern] (xb) [below of=xa, short, yshift=2mm] {\nodexb};
      
      \draw [epattern] (y) to (xa);
      \draw [epattern] (xa) to (ya);
      \draw [epattern] (ya) to (xb);

      \node [npattern] (xp) [right of=y, short] {\nodexp};
      \node [npattern] (xpa) [right of=ya, short] {\nodexap};
      \node [npattern] (xpb) [below of=xpa, short, yshift=2mm] {\nodexbp};
    
      \draw [epattern] (xp) to (y);
      \draw [epattern] (xpa) to (ya);
    }
    \graphbox{$K$}{27mm}{0mm}{36mm}{10mm}{-7mm}{-5mm}{
      \node [npattern] (x)
      {\nodex};
      \node [npattern] (y) [right of=x, short] {\nodey};
      \draw [epattern] (x) to[bend left=20] node {} (y);
      \draw [epattern] (y) to[bend left=20] node {} (x);
      
      \node [npattern] (xp) [right of=y, short] {\nodexp};
      \draw [epattern] (xp) to (y);
    }
    
    \graphbox{$R$}{64mm}{0mm}{36mm}{10mm}{-8mm}{-5mm}{
      \node [npattern] (x) {\nodex};
      \node [npattern] (y) [right of=x] {\nodey \ \nodexp};
      \draw [epattern] (x) to[bend left=20] node {} (y);
      \draw [epattern] (y) to[bend left=20] node {} (x);
      
      \draw [epattern,loop=0,looseness=3] (y) to node {} (y);
    }
    
    \graphbox{$G_R$}{64mm}{-11mm}{36mm}{25mm}{-7mm}{-5mm}{
            \node [npattern] (x)
      {\nodex};
      \node [npattern] (y) [right of=x] {\nodey \ \nodexp};
      \draw [epattern] (x) to[bend left=20] node {} (y);
      \draw [epattern] (y) to[bend left=20] node {} (x);

      \node [npattern] (xa) [below of=x, short, yshift=2mm] {\nodexa};
      \node [npattern] (ya) [right of=xa, short] {\nodeya};
      \node [npattern] (xb) [below of=xa, short, yshift=2mm] {\nodexb};
      
      \draw [epattern] (y) to (xa);
      \draw [epattern] (xa) to (ya);
      \draw [epattern] (ya) to (xb);

      \node [npattern] (xpa) [right of=ya, short] {\nodexap};
      \node [npattern] (xpb) [below of=xpa, short, yshift=2mm] {\nodexbp};
    
      \draw [epattern] (xpa) to (ya);

      \draw [epattern,loop=0,looseness=3] (y) to node {} (y);
      
    }
    \graphbox{$L'$}{0mm}{-37mm}{26mm}{10mm}{-3mm}{-5mm}{
      \node [npattern] (x)
      {\nodex};
      \node [npattern] (y) [right of=x, short] {\nodey};
      \draw [epattern] (x) to[bend left=20] node {} (y);
      \draw [epattern] (y) to[bend left=20] node {} (x);
    }
    \graphbox{$K'$}{27mm}{-37mm}{36mm}{10mm}{-7mm}{-5mm}{
      \node [npattern] (x)
      {\nodex};
      \node [npattern] (y) [right of=x, short] {\nodey};
      \draw [epattern] (x) to[bend left=20] node {} (y);
      \draw [epattern] (y) to[bend left=20] node {} (x);
      
      \node [npattern] (xp) [right of=y, short] {\nodexp};
      \draw [epattern] (xp) to (y);
    }
    \transparentgraphbox{$R'$}{64mm}{-37mm}{36mm}{10mm}{-7mm}{-5mm}{
      \node [npattern] (x)
      {\nodex};
      \node [npattern] (y) [right of=x] {\nodey \ \nodexp};
      \draw [epattern] (x) to[bend left=20] node {} (y);
      \draw [epattern] (y) to[bend left=20] node {} (x);
      
      \draw [epattern,loop=0,looseness=3] (y) to node {} (y);
    }
  \end{tikzpicture}
  }

}%
    \vspace{-2ex}
    \caption{The effects of PBPO rules can be difficult to oversee.}
    \label{fig:pbpo:difficult}
\end{figure}

\begin{example}
\label{example:pbpo:spiral}
Consider the following PBPO rule, and its application to host graph $G_L$ (the morphisms are defined in the obvious way) shown in Figure~\ref{fig:pbpo:difficult}.
Intuitively, host graph $G_L$ is spiralled over the pattern of $L'$. The pullback then duplicates all elements mapped onto $x \in V_{L'}$ and any \pagebreak
incident edges directed at a node mapped into $y \in V_{L'}$. The pushout, by contrast, affects only the image of $u : K \to G_K$.
\end{example}

The two examples show how locality of transformations cannot be enforced using PBPO. They also illustrate how it can be difficult to characterize the class of host graphs $G_L$ and adherences $\alpha$ that establish a match, even for trivial left-hand sides. Finally, Example \ref{example:pbpo:spiral} in particular highlights an asymmetry that we find unintuitive: if one duplicates and then merges/extends pattern elements of $L'$, the duplication affects all elements in the $\alpha$-preimage of $t_L(L)$ (which could even consist of multiple components isomorphic to $t_L(L)$), whereas the pushout affects only $u(K) \subseteq G_K$. In \pbpostrong, by contrast, transformations of the pattern affect the pattern only, and the overall applicability of a rule is easy to understand if the context graph is relatively simple (e.g., as in Example~\ref{ex:simple:rewrite:step}).

\begin{remark}[$\Gamma$-preservation]
A locality notion has been defined for PBPO called \emph{$\Gamma$-preservation}~\cite{corradini2019pbpo}. $\Gamma$ is some subobject of $L'$, and a rewrite step  $G_L \Rightarrow_\rho^{m, \alpha} G_R$ is said to be \emph{$\Gamma$-preserving} if the $\alpha : G_L \to L'$ preimage of $\Gamma \subseteq L'$ is preserved from $G_L$ to $G_R$ (roughly meaning that this preimage is neither modified nor duplicated). Similarly, a rule is $\Gamma$-preserving if the rewrite steps it gives rise to are $\Gamma$-preserving. If one chooses $\Gamma$ to be the context graph (the right component) of $L'$ in Example~\ref{ex:pbpo:remove:loop}, then the rule, interpreted as a PBPO rule, is $\Gamma$-preserving. However, the effect of the rule is not local in our understanding of the word, since PPBO does not prevent mapping parts of the context graph of $G_L$ onto the image of $t_L$ which usually \emph{is} modified.
\end{remark}

\subsection{Modeling PBPO with \pbpostrong}
\label{sec:relating:simulating}

We will now prove that in many categories of interest (including locally small toposes), any PBPO rule can be modeled by a set of \pbpostrong rules; and even by a single rule if PBPO matches are restricted to monic matches. We do this by proving a number of novel results about PBPO.

Since any PBPO rule is equivalent to a canonical PBPO rule~\cite{corradini2019pbpo}, we restrict attention to canonical rules in this section.

\newcommand{\pbpomonicstep}[1]{\Rightarrow_{#1}^{\mono}}
\newcommand{\pbpostrongstep}[1]{\Rightarrow_{#1}^{\mathrm{SM}}}

\begin{definition}
We define the restrictions
\begin{itemize}
\item ${\pbpomonicstep{\rho}} = \{ (G,G') \in {\pbpostep{\rho}} \mid \exists m \ \alpha. \ G \Rightarrow_\rho^{m, \alpha} G' \land \text{$m$ is monic} \}$, and
\item ${\pbpostrongstep{\rho}} = 
\{ (G,G') \in {\pbpostep{\rho}} \mid 
\exists m \ \alpha. \ G \Rightarrow_\rho^{m, \alpha} G' \land \strongmatch{t_L}{m}{\alpha}
\}
$
\end{itemize}
for canonical PBPO rules $\rho$.
\end{definition}

\begin{definition}[Monic PBPO Rule]
A canonical PBPO rule $\rho$ is called \emph{monic} if its typing $t_L$ is monic.
\end{definition}

Note that monic (canonical) PBPO rules $\rho$ define a \pbpostrong rule by simply forgetting the pushout information in the rule.

\begin{proposition}
\label{prop:basic:facts:pbpo:restrictions}
If $\rho$ is monic, then ${\pbpostep{\rho}} = {\pbpomonicstep{\rho}}$ and ${\pbpostrongstep{\rho}} = {\pbpostep{\rho}^{\mathrm{PBPO}^{+}}}$. \qed
\end{proposition}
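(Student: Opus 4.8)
The plan is to prove the two equalities separately. For ${\pbpostep{\rho}} = {\pbpomonicstep{\rho}}$ I expect only a one-line cancellation argument: the inclusion $\supseteq$ is immediate, and for $\subseteq$ I would take an arbitrary PBPO step $G_L \Rightarrow_\rho^{m,\alpha} G_R$, recall from Definition~\ref{def:pbpo:rewrite:step} that it satisfies $t_L = \alpha \circ m$, and use that $\rho$ being monic makes $t_L$ monic, so that $\alpha \circ m$ is monic and hence $m$ is monic (a left factor of a monic is monic). Thus every PBPO step of $\rho$ is already witnessed by a monic match, which is exactly what $\subseteq$ asserts.

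For ${\pbpostrongstep{\rho}} = {\pbpostep{\rho}^{\mathrm{PBPO}^{+}}}$ the key observation I would build on is that a strong-match PBPO step of $\rho$ and a \pbpostrong step of $\rho$ viewed as a \pbpostrong rule (obtained by discarding the pushout data $r'$, $R'$, $t_R$ — legitimate precisely because canonicity makes the left square a pullback and monicity makes $t_L$ monic) are assembled from the very same pullback and the very same pushout. Both constructions form $G_K$ as the pullback of $\alpha$ and $l'$, and then $G_R$ as the pushout of $G_K \xleftarrow{u} K \xrightarrow{r} R$; the only datum the two definitions describe differently is the morphism $u$. In a PBPO step $u$ is fixed by the pullback's universal property, so it satisfies both $g_L \circ u = m \circ l$ and $u' \circ u = t_K$; in a \pbpostrong step $u$ is, by Definition~\ref{def:pbpostrong:rewrite:step} together with Lemma~\ref{lem:uniqueness:u}, the \emph{unique} morphism with $u' \circ u = t_K$. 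I would then settle both inclusions by reconciling these descriptions. For $\subseteq$: given $(G_L,G_R) \in {\pbpostrongstep{\rho}}$ witnessed by a match with $\strongmatch{t_L}{m}{\alpha}$, the \pbpostrong step diagram for $\rho$ and $\alpha$ sits inside the PBPO step diagram, and the PBPO $u$ satisfies $u' \circ u = t_K$, so by Lemma~\ref{lem:uniqueness:u} it coincides with the $u$ of Definition~\ref{def:pbpostrong:rewrite:step}; since $G_K$ and $G_R$ then agree (up to isomorphism), $G_L$ \pbpostrong-rewrites to $G_R$ via $\alpha$, i.e.\ $(G_L,G_R) \in {\pbpostep{\rho}^{\mathrm{PBPO}^{+}}}$. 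For $\supseteq$: given a \pbpostrong step $G_L \Rightarrow_\rho^{\alpha} G_R$, with strong match $(m,\alpha)$, pullback $G_K$, canonical $u$ and pushout $G_R$, Lemma~\ref{lem:topleft:pullback} tells me this $u$ makes $L \xleftarrow{l} K \xrightarrow{u} G_K$ a pullback for $L \xrightarrow{m} G_L \xleftarrow{g_L} G_K$, hence it satisfies $g_L \circ u = m \circ l$ as well as $u' \circ u = t_K$; that is, it is exactly the morphism a PBPO step of $\rho$ yields from $(m,\alpha)$. The pushout $G_R$ is therefore the same, and the morphism $w' : G_R \to R'$ required by a PBPO step is supplied by Lemma~\ref{lem:bottomright:pushout} applied to the rule's span $R \xleftarrow{r} K \xrightarrow{t_K} K'$, whose pushout is $R'$ since $\rho$ is canonical. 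Hence $G_L \Rightarrow_\rho^{m,\alpha} G_R$ is a PBPO step witnessed by a strong match, so $(G_L,G_R) \in {\pbpostrongstep{\rho}}$.

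The step I anticipate being the main obstacle is exactly this identification of $u$: PBPO characterizes it by a two-equation pullback universal property, whereas \pbpostrong characterizes it (Lemma~\ref{lem:uniqueness:u}) by the single equation $u' \circ u = t_K$ plus a uniqueness claim that genuinely fails for general PBPO matches (Remark~\ref{remark:pbpo:u:not:unique}). Getting this right is where the strong-match hypothesis is essential — it is precisely what licenses Lemmas~\ref{lem:topleft:pullback} and~\ref{lem:uniqueness:u} — together with the monicity of $t_L$, which is what makes $\rho$ a bona fide \pbpostrong rule in the first place (and, for the first equality, what forces every match to be monic). Everything else is routine unwinding of the two step definitions and the usual uniqueness of pullbacks and pushouts up to isomorphism.
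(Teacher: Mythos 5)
Your proof is correct. The paper states this proposition without proof (the \qed{} is attached directly to the statement), treating it as an immediate consequence of the definitions, and your argument is precisely the routine verification being elided: monicity of $t_L = \alpha \circ m$ forces $m$ monic for the first equality, and for the second the only non-trivial point is reconciling the two characterizations of $u$ (pullback-induced versus the unique solution of $u' \circ u = t_K$), which you correctly settle via Lemmas~\ref{lem:topleft:pullback} and~\ref{lem:uniqueness:u} under the strong-match hypothesis.
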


In the remainder of this section we establish two claims:
\begin{enumerate}
    \item \emph{Monic matching suffices}: 
    for any canonical PBPO rule $\rho$ and assuming certain conditions, there exists a set of PBPO rules $S$ that precisely \pagebreak models $\rho$ when restricting $S$ to monic matching, i.e., ${\pbpostep{\rho}} = {\bigcup\{ {\pbpomonicstep{\sigma}} \mid \sigma \in S \}}$ (Corollary~\ref{corollary:pbpo:monic:matching});
    \item \emph{Strong matching can be modeled through rule adaptation}: for any canonical PBPO rule $\sigma$ and assuming certain conditions, there exists a monic rule $\tau$ such that ${\pbpomonicstep{\sigma}} =
     {\pbpostrongstep{\tau}}$ (Lemma~\ref{lemma:pbpo:monic:typing}).
\end{enumerate}

Because \pbpostrong rewriting boils down to using monic PBPO rules with a strong matching policy, from these facts and conditions it follows that any PBPO rule can be modeled by a set of \pbpostrong rules (Theorem~\ref{thm:pbpo:plus:models:pbpo}).

The following definition defines a rule $\sigma$ for every factorization of a type morphism $t_L$ of a rule $\tau$.

\begin{definition}[Compacted Rule]
\label{def:pbpo:compacted:rule}
For any canonical PBPO rule $\rho$ (on the left) and factorization $t_L = t_{L_c} \mathop{\circ} e$ where $e$ is epic (note that $t_{L_c}$ is uniquely determined since $e$ is right-cancellative), the \emph{compacted rule} $\rho_e$ is defined as the lower half of the commuting diagram on the right:
\begin{center}\vspace{-1ex}
  \begin{tikzpicture}[node distance=13mm,l/.style={inner sep=.5mm},baseline=-6mm]
    \node (L) {$L$};
    \node (K) [right of=L] {$K$}; \draw [->] (K) to node [above] {$l$} (L);
    \node (LP) [below of=L] {$L'$};
      \draw [->] (L) to node [left] {$t_L$} (LP);
    \node (KP) [below of=K] {$K'$};
      \draw [->] (K) to node [right] {$t_K$} (KP);
      \draw [->] (KP) to node [below] {$l'$} (LP);
       \node at ([shift={(-4mm,-4mm)}]K) {\PB};
    \node (R) [right of=K] {$R$}; 
      \draw [->] (K) to node [above] {$r$} (R);
    \node (RP) [below of=R] {$R'$};
      \draw [->] (R) to node [right] {$t_R$} (RP);
      \draw [->] (KP) to node [below] {$r'$} (RP);
      \node at ([shift={(-4mm,4mm)}]RP) {\PO};
  \end{tikzpicture}
  \qquad
  \raisebox{2.5mm}{
  \begin{tikzpicture}[node distance=16mm,l/.style={inner sep=.5mm},baseline=-6mm,d/.style={node distance=9mm}]
    \node[opacity=0.5] (L) {$L$};
    \node[opacity=0.5] (K) [right of=L] {$K$}; \draw [->, opacity=0.5] (K) to node [above] {$l$} (L);
    \node (Lc) [below of=L,d] {$L_c$};
      \draw [->>, opacity=0.5] (L) to node [left] {$e$} (Lc);
    \node (Kc) [below of=K,d] {$K_c$};
      \draw [->, opacity=0.5] (K) to node [right] {} (Kc);
      \draw [->] (Kc) to node [below] {} (Lc);
       \node[opacity=0.5] at ([shift={(-4mm,-4mm)}]K) {\PB};
    \node[opacity=0.5] (R) [right of=K] {$R$}; 
      \draw [->, opacity=0.5] (K) to node [above] {$r$} (R);
    \node (Rc) [below of=R,d] {$R_c$};
      \draw [->, opacity=0.5] (R) to node [right] {} (Rc);
      \draw [->] (Kc) to node [below] {} (Rc);
      \node[opacity=0.5] at ([shift={(-4mm,4mm)}]Rc) {\PO};
      
    \node (LP) [below of=Lc,d] {$L'$};
      \draw [->] (Lc) to node [left] {$t_{L_c}$} (LP);
    \node (KP) [below of=Kc,d] {$K'$};
      \draw [->] (Kc) to node [right] {} (KP);
      \draw [->] (KP) to node [below] {$l'$} (LP);
       \node at ([shift={(-4mm,-4mm)}]Kc) {\PB};
    \node (RP) [below of=Rc,d] {$R'$};
      \draw [->] (KP) to node [below] {$r'$} (RP);
      \draw [->] (Rc) to node [right] {} (RP);
      \node at ([shift={(-4mm,4mm)}]RP) {\PO};
    
    \draw [->, bend right=80, opacity=0.5] (L) to node [left] {$t_L$} (LP);
    
    \draw [->, bend left=25, opacity=0.5] (K) to node [right, yshift=-3mm] {$t_K$} (KP);
    
    \draw [->, bend left=80, opacity=0.5] (R) to node [right, yshift=-3mm] {$t_R$} (RP);
  \end{tikzpicture}
  }
\end{center}
\end{definition}

\begin{proposition}
The properties implicitly asserted in Definition~\ref{def:pbpo:compacted:rule} hold. \qed
\end{proposition}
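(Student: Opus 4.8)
The plan is to show that the lower half of the diagram in Definition~\ref{def:pbpo:compacted:rule} is a well-defined canonical PBPO rule and that every commutativity and every $\PB$/$\PO$ marking drawn there genuinely holds. The objects $L'$, $K'$, $R'$ and the arrows $l'$, $r'$ are inherited verbatim from $\rho$, and $L_c$, $e$, $t_{L_c}$ are the data of the chosen factorization, so the only things left to construct are the objects $K_c$, $R_c$ together with the remaining arrows around them; I would obtain these from a single pullback and a single pushout. Throughout I would assume the ambient category admits that one pullback and that one pushout (as is already implicit in the PBPO setting), and everything is determined up to the usual isomorphism.

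First I would let $K_c$, with projections $p : K_c \to K'$ and $q : K_c \to L_c$, be a pullback of $K' \xrightarrow{l'} L' \xleftarrow{t_{L_c}} L_c$; this is the lower-left pullback square by construction. Since $l' \circ t_K = t_L \circ l = t_{L_c} \circ e \circ l$, the universal property of $K_c$ yields a unique $k : K \to K_c$ with $p \circ k = t_K$ and $q \circ k = e \circ l$; this is the arrow $K \to K_c$ of the diagram, and $p \circ k = t_K$ is one of the bent-arrow identities $t_L = t_{L_c}\circ e$, $t_K = p\circ k$, $t_R = h\circ s$ that must be verified. For the upper-left $\PB$ marking I would paste horizontally: the upper-left square $L \xleftarrow{l} K \xrightarrow{k} K_c$ over $L_c$ and the $K_c$-defining square compose to an outer rectangle whose horizontal composites are $p \circ k = t_K$ on top and $t_{L_c} \circ e = t_L$ on the bottom, i.e.\ precisely the left (pullback) square of $\rho$; as the right constituent of the composite is itself a pullback, the pullback lemma forces the left constituent $L \xleftarrow{l} K \xrightarrow{k} K_c$ over $L_c$ to be a pullback.

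Dually, I would let $R_c$, with coprojections $s : R \to R_c$ and $r_c : K_c \to R_c$, be a pushout of $R \xleftarrow{r} K \xrightarrow{k} K_c$; this is the upper-right pushout square by construction. From $r' \circ p \circ k = r' \circ t_K = t_R \circ r$, the universal property of $R_c$ yields a unique $h : R_c \to R'$ with $h \circ s = t_R$ and $h \circ r_c = r' \circ p$; this is the arrow $R_c \to R'$, and $h \circ s = t_R$ is the last bent-arrow identity. For the lower-right $\PO$ marking I would paste vertically: stacking the $R_c$-defining pushout square on top of the lower-right square $K' \xleftarrow{p} K_c \xrightarrow{r_c} R_c$ completed by $r'$ and $h$ gives an outer rectangle whose vertical composites are $p \circ k = t_K$ and $h \circ s = t_R$, i.e.\ precisely the right (pushout) square of $\rho$; as the top constituent is a pushout, the dual of the pullback lemma forces the bottom constituent, namely that lower-right square, to be a pushout. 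Putting everything together, the lower half $L_c \xleftarrow{q} K_c \xrightarrow{r_c} R_c$ typed by $(t_{L_c}, p, h)$ over $L' \xleftarrow{l'} K' \xrightarrow{r'} R'$ is a commuting diagram whose left square is a pullback and whose right square is a pushout, hence a canonical PBPO rule, and the decompositions $t_L = t_{L_c} \circ e$, $t_K = p \circ k$, $t_R = h \circ s$ all hold; this is exactly what the displayed diagram asserts, while the uniqueness of $t_{L_c}$ stated there is immediate from $e$ being epic, so $\rho_e$ depends only on $e$. I do not expect a real obstacle: the one point requiring care is keeping the two pasting-lemma applications correctly oriented — right constituent and outer rectangle are the pullbacks in the first, top constituent and outer rectangle are the pushouts in the second — after which the rest is a routine chase with universal properties.
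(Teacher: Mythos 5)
Your proof is correct and is exactly the intended argument: the paper omits the proof as immediate, and the evident justification is precisely your construction of $K_c$ and $R_c$ by one pullback and one pushout, the induced arrows $K \to K_c$ and $R_c \to R'$ from the universal properties, and the two applications of the pasting (pullback) lemma and its dual to transfer the PB/PO properties from the squares of the canonical rule $\rho$ to the remaining squares. Both pasting applications are oriented correctly, so there is nothing to add.
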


\store{lemma}{}{lem:compacted:rule:lemma}{
Let $\rho$ be a canonical PBPO rule, $G_L$ an object, and $m' \circ e : L \to G_L$ a match morphism for a mono $m'$ and epi $e$.
We have:
\begin{align*}
    {G_L \Rightarrow_\rho^{(m' \mathop{\circ} e),\alpha} G_R} \quad \iff \quad 
    {G_L \Rightarrow_{\rho_e}^{m',\alpha} G_R}\;\text{.} \tag*{\qedappendix}
\end{align*}
}

Recall that a category is locally small if the collection of morphisms between any two objects $A$ and $B$  (and so also all factorizations) forms a set.

\begin{corollary}
\label{corollary:pbpo:monic:matching}
In locally small categories in which any morphism can be factorized into an epi followed by a mono, for every canonical PBPO rule $\rho$, there exists a set of PBPO rules $S$ such that ${\pbpostep{\rho}} = {\bigcup\{ {\pbpomonicstep{\sigma}} \mid \sigma \in S \}}$. \qed
\end{corollary}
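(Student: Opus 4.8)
The plan is to take $S$ to be the set of \emph{all compacted rules} of $\rho$: that is, $S = \{\, \rho_e \mid t_L = t_{L_c}\circ e,\ e \text{ epic} \,\}$, where $\rho_e$ is the canonical PBPO rule assigned by Definition~\ref{def:pbpo:compacted:rule} (and validated by the proposition following it) to each factorization of the type morphism $t_L$ through an epi $e$. By local smallness the factorizations of $t_L$ form a set (as recalled just before the statement), and each such factorization is determined by its epi part alone, since $t_{L_c}$ is then forced by right-cancellativity of $e$; hence $S$ is a set. With this $S$, the equality ${\pbpostep{\rho}} = {\bigcup\{ {\pbpomonicstep{\sigma}} \mid \sigma \in S \}}$ reduces, in both directions, to an application of Lemma~\ref{lem:compacted:rule:lemma} together with the hypothesis that every morphism factors as an epi followed by a mono.

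For the inclusion ${\pbpostep{\rho}} \subseteq {\bigcup\{ {\pbpomonicstep{\sigma}} \mid \sigma \in S \}}$, I would start from a step $G_L \Rightarrow_\rho^{m,\alpha} G_R$ and factor the match as $m = m' \circ e$ with $m'$ monic and $e$ epic. From $t_L = \alpha \circ m = (\alpha \circ m') \circ e$ and cancellativity of $e$ it follows that $t_L$ factors through $e$, so $\rho_e \in S$. Lemma~\ref{lem:compacted:rule:lemma} then converts the $\rho$-step through $m = m' \circ e$ into a $\rho_e$-step through $m'$, i.e.\ $G_L \Rightarrow_{\rho_e}^{m',\alpha} G_R$; since $m'$ is monic this witnesses $(G_L, G_R) \in {\pbpomonicstep{\rho_e}}$.

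For the reverse inclusion, take $\sigma = \rho_e \in S$ arising from $t_L = t_{L_c} \circ e$, and a monic step $G_L \Rightarrow_{\rho_e}^{m',\alpha} G_R$. Then $m' \circ e : L \to G_L$ is a match morphism of exactly the shape required by Lemma~\ref{lem:compacted:rule:lemma} (a mono after an epi), and reading the equivalence of that lemma from right to left yields $G_L \Rightarrow_\rho^{(m' \circ e),\alpha} G_R$, hence $(G_L, G_R) \in {\pbpostep{\rho}}$.

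The substantive work has already been done in the compacted-rule construction and in Lemma~\ref{lem:compacted:rule:lemma}, so the rewriting argument here is essentially immediate. The only real delicacy is set-theoretic: one must be sure that ``all compacted rules of $\rho$'' genuinely forms a set, which is precisely why local smallness (hence the fact that the relevant factorizations form a set) is assumed, and why the epi-mono factorization hypothesis is needed in order to reduce an arbitrary match to one handled by the compacted rules. A minor point to check along the way is that each $\rho_e$ is indeed a \emph{canonical} PBPO rule, so that ${\pbpomonicstep{\rho_e}}$ is defined, and that $m' \circ e$ qualifies as a PBPO match morphism for $\rho$ in the second inclusion; both follow directly from Definition~\ref{def:pbpo:compacted:rule} and its accompanying proposition.
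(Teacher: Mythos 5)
Your proposal is correct and matches the paper's intended argument: the corollary is stated with an immediate \qed precisely because it follows from taking $S$ to be the set of all compacted rules $\rho_e$, applying Lemma~\ref{lem:compacted:rule:lemma} in both directions, using epi--mono factorization of the match for the forward inclusion, and invoking local smallness for set-hood. Your additional checks (that each $\rho_e$ is canonical and that $m'\circ e$ is a legitimate PBPO match) are exactly the routine verifications the paper leaves implicit.
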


\begin{definition}[Amendable Category]
\label{def:amendable:category}
A category is $\emph{amendable}$ if for any $t_L : L \to L'$, there exists a factorization $L \stackrel{t_L'}{\mono} L'' \stackrel{\beta}{\to} L'$ of $t_L$ such that for any factorization $L \stackrel{m}{\mono} G_L \stackrel{\alpha}{\to} L'$ of $t_L$, there exists an $\alpha'$ making the diagram
\begin{center}
  \begin{tikzpicture}[node distance=12mm,l/.style={inner sep=.5mm},baseline=-6mm]
    \node (L) {$L$};
    \node (GL) [right of=L] {$G_L$};
    \node (LP) [right of=GL] {$L'$};
    \node (L2) [below of=L] {$L$};
    \node (LPP) [below of=GL] {$L''$};
    
    \draw [->] (L) to[bend left=25] node [above] {$t_L$} (LP);
    \draw [>->] (L) to node [below] {$m$} (GL);
    \draw [->] (GL) to node [below] {$\alpha$} (LP);
    
    \draw [>->] (L) to node [left] {$1_L$} (L2);
    \draw [->] (GL) to node [right] {$\alpha'$} (LPP);
    
    \draw [>->] (L2) to node [above] {$t_L'$} (LPP);
    
    \draw [->] (LPP) to [bend right=20] node [below right] {$\beta$} (LP);
  \end{tikzpicture}
\end{center}
commute.

The category is \emph{strongly amendable} if there exists a factorization of $t_L$ witnessing amendability that moreover makes the left square a pullback square.
\end{definition}
\pagebreak

Strong amendability is intimately related to the concept of \emph{materialization}~\cite{corradini2019rewriting}. Namely, if the factorization $L \stackrel{t_L'}{\mono} L'' \stackrel{\beta}{\to} L'$ of $t_L$ establishes the pullback square and is final (the $\alpha'$ morphisms not only exist, but they exist uniquely), then $\beta \circ t_L'$ is the materialization of $t_L$. In general we do not need finality, and for one statement (Lemma~\ref{lemma:pbpo:monic:typing}) we require weak amendability only.

We have the following sufficient condition for strong amendability. 

\begin{proposition}
\label{prop:pmc:strong:amendability}
If all slice categories $\catname{C}/X$ of a category $\catname{C}$ have partial map classifiers, then $\catname{C}$ is strongly amendable.
\end{proposition}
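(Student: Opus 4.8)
The plan is to recognise that strong amendability of \catname{C} at a morphism $t_L : L \to L'$ is essentially a restatement of the existence of a partial map classifier for the object $t_L$ in the slice category \catname{C}/L', and then to read off the witnessing factorization and the pullback square directly from that classifier.

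First I would fix $t_L : L \to L'$ and regard it as an object of \catname{C}/L'. By hypothesis this object has a partial map classifier: an object $\overline{t_L}$ of \catname{C}/L', say $\beta : L'' \to L'$ in \catname{C}, together with a monomorphism $\eta : t_L \mono \overline{t_L}$ of \catname{C}/L', which in \catname{C} amounts to a morphism $t_L' : L \to L''$ with $\beta \circ t_L' = t_L$. Since the forgetful functor $\catname{C}/L' \to \catname{C}$ preserves and reflects monomorphisms and pullbacks, $t_L'$ is monic in \catname{C}, so $L \stackrel{t_L'}{\mono} L'' \stackrel{\beta}{\to} L'$ is a bona fide factorization of $t_L$ through a mono; this is the factorization I propose as the witness of (strong) amendability.

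Next, let $L \stackrel{m}{\mono} G_L \stackrel{\alpha}{\to} L'$ be any factorization of $t_L$ with $m$ monic. Then $\alpha$ is an object of \catname{C}/L', and $m$ (using $\alpha \circ m = t_L$) lifts to a monomorphism $\tilde{m} : t_L \mono \alpha$ of \catname{C}/L', so that the pair $(\tilde{m} : t_L \mono \alpha,\ \id{t_L} : t_L \to t_L)$ is a partial map from $\alpha$ to $t_L$ in \catname{C}/L'. The universal property of the classifier then yields a (unique) total morphism $\alpha' : \alpha \to \overline{t_L}$ of \catname{C}/L' --- in \catname{C} a morphism $\alpha' : G_L \to L''$ with $\beta \circ \alpha' = \alpha$ --- such that the square with sides $\id{L}$, $m$, $t_L'$, $\alpha'$ is a pullback in \catname{C}/L'. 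Its commutativity gives $\alpha' \circ m = t_L' \circ \id{L} = t_L'$, hence the whole diagram in Definition~\ref{def:amendable:category} commutes; and since the forgetful functor reflects pullbacks, that square is a pullback in \catname{C} as well. This is precisely strong amendability.

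I do not expect a genuine obstacle; the work is entirely in the slice-category bookkeeping. The points to get right are: that monomorphisms and pullbacks in \catname{C}/L' coincide with those in \catname{C}; that the partial map classifier is taken relative to the class of all monomorphisms, so that every mono-factorization of $t_L$ literally is a partial map in the slice; and that the pullback square returned by the classifier is exactly the left-hand square of the amendability diagram, with $\id{L}$ along one edge. As a by-product, the uniqueness of $\alpha'$ shows this factorization is final, which matches the remark that strong amendability through a final factorization produces the materialization of $t_L$.
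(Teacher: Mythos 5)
Your argument is correct and is essentially the paper's proof with its citation unfolded: the paper simply invokes the fact that all arrows have materializations when every slice category has partial map classifiers (Proposition~8 of Corradini et al.'s materialization paper), and that materialization is constructed exactly as you do it, namely as the partial map classifier of $t_L$ viewed as an object of $\catname{C}/L'$, whose classifying pullback squares project to the required left-hand pullback in $\catname{C}$. The slice-category bookkeeping you flag (the forgetful functor $\catname{C}/L' \to \catname{C}$ preserving and reflecting monos and pullbacks) is indeed all that is needed, so there is no gap.
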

\begin{proof}
Immediate from the fact that in this case all arrows have materializations~\cite[Proposition 8]{corradini2019rewriting}.
\end{proof}

\begin{corollary}
\label{cor:toposes:strongly:amendable}
Any topos is strongly amendable.
\end{corollary}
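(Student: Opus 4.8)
The plan is to reduce the statement to Proposition~\ref{prop:pmc:strong:amendability}, which already yields strong amendability from the existence of partial map classifiers in \emph{all} slice categories. So it suffices to verify that every slice category of a topos has a partial map classifier.

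First I would invoke the fundamental theorem of topos theory: for any topos $\mathcal{E}$ and any object $X$, the slice category $\mathcal{E}/X$ is again a topos (see e.g.~\cite{mac1971categories}, or any standard topos-theory reference). Then I would recall the standard fact that every topos has partial map classifiers: the subobject classifier $\Omega$ together with finite limits produces, for each object $A$, a partial map classifier $A \mono \tilde{A}$, i.e.\ $\tilde{A}$ represents the functor sending $Y$ to the set of partial maps $Y \rightharpoonup A$. Combining the two facts: each slice category $\mathcal{E}/X$ is a topos, hence has partial map classifiers, so the hypothesis of Proposition~\ref{prop:pmc:strong:amendability} is satisfied and $\mathcal{E}$ is strongly amendable.

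There is essentially no genuine obstacle here; the only care needed is to cite the two ingredients correctly (the fundamental theorem, and the existence of partial map classifiers in an elementary topos). If a more self-contained argument were preferred, one could instead observe directly that a partial map in $\mathcal{E}/X$ is precisely a partial map in $\mathcal{E}$ whose domain of definition is a subobject, and build the classifier in $\mathcal{E}/X$ from $\tilde{A}$ in $\mathcal{E}$ equipped with the evident structure map to $X$; but routing through the fundamental theorem is the cleanest route and keeps the proof to a single line.
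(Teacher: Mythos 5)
Your proposal is correct and matches the paper's own proof exactly: the paper likewise combines the facts that any slice of a topos is a topos and that toposes have partial map classifiers, then applies Proposition~\ref{prop:pmc:strong:amendability}. No further comment is needed.
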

\begin{proof}
Toposes have partial map classifiers, and any slice category of a topos is a topos. \qed
\end{proof}

\begin{lemma}
\label{lemma:pbpo:monic:typing}
In an amendable category $\mathcal{C}$, for any PBPO rule $\rho$, there exists a monic PBPO rule $\sigma$ such that ${\pbpomonicstep{\rho}} = {\pbpostep{\sigma}}$. If $\mathcal{C}$ is moreover strongly amendable, then additionally ${\pbpomonicstep{\rho}} = {\pbpostrongstep{\sigma}}$.
\end{lemma}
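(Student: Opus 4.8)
The argument relativizes to canonical rules, which is harmless since every PBPO rule is equivalent to a canonical one. So fix a canonical PBPO rule $\rho$ with top span $L \xleftarrow{l} K \xrightarrow{r} R$, typing span $L' \xleftarrow{l'} K' \xrightarrow{r'} R'$ and typings $t_L,t_K,t_R$, and let $L \stackrel{t_L'}{\mono} L'' \stackrel{\beta}{\to} L'$ be the factorization of $t_L$ supplied by amendability (depending only on $t_L$). The plan is to re-type $\rho$ along this factorization. Concretely, let $p : K'_\sigma \to K'$ and $l'_\sigma : K'_\sigma \to L''$ be a pullback of $K' \xrightarrow{l'} L' \xleftarrow{\beta} L''$; since $l' \circ t_K = t_L = \beta \circ (t_L' \circ l)$, there is a unique $t_K' : K \to K'_\sigma$ with $p \circ t_K' = t_K$ and $l'_\sigma \circ t_K' = t_L' \circ l$. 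Let $r'_\sigma : K'_\sigma \to R'_\sigma$ and $t_R' : R \to R'_\sigma$ be a pushout of $R \xleftarrow{r} K \xrightarrow{t_K'} K'_\sigma$. Define $\sigma$ to have top span $L \xleftarrow{l} K \xrightarrow{r} R$, typing span $L'' \xleftarrow{l'_\sigma} K'_\sigma \xrightarrow{r'_\sigma} R'_\sigma$, and typings $t_L',t_K',t_R'$. The right square is a pushout by construction; the left square is a pullback because pasting it with the pullback square $(p,l'_\sigma,l',\beta)$ reproduces the (pullback) left square of $\rho$, so the pullback pasting lemma applies. Being a pullback of the mono $t_L'$, the morphism $t_K'$ is monic, so $\sigma$ is a canonical monic PBPO rule.

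For the inclusion ${\pbpomonicstep{\rho}} \subseteq {\pbpostep{\sigma}}$, take a step $G_L \Rightarrow_\rho^{m,\alpha} G_R$ with $m$ monic. Amendability applied to the factorization $L \xrightarrow{m} G_L \xrightarrow{\alpha} L'$ yields $\alpha' : G_L \to L''$ with $\alpha' \circ m = t_L'$ and $\beta \circ \alpha' = \alpha$. Using the pullback $K'_\sigma$, lift $u' : G_K \to K'$ to the unique $u'_\sigma : G_K \to K'_\sigma$ with $p \circ u'_\sigma = u'$ and $l'_\sigma \circ u'_\sigma = \alpha' \circ g_L$. Pasting the square $(g_L,u'_\sigma,\alpha',l'_\sigma)$ with $(p,l'_\sigma,l',\beta)$ reproduces the bottom-left pullback square of the $\rho$-step, so the pasting lemma makes $(g_L,u'_\sigma)$ a pullback of $G_L \xrightarrow{\alpha'} L'' \xleftarrow{l'_\sigma} K'_\sigma$ --- exactly the pullback that a $\sigma$-step over the match $(m,\alpha')$ demands. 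A short chase shows the $\rho$-step's $u : K \to G_K$ satisfies $g_L \circ u = m \circ l$ and $u'_\sigma \circ u = t_K'$, so it is the arrow the $\sigma$-step prescribes; since the gluing span $R \xleftarrow{r} K \xrightarrow{u} G_K$ is unchanged, the same $G_R$ results, so $G_L \Rightarrow_\sigma^{m,\alpha'} G_R$. For $\supseteq$, $\sigma$ being monic lets us invoke Proposition~\ref{prop:basic:facts:pbpo:restrictions} to take a witnessing match $m$ of any $\sigma$-step to be monic; put $\alpha := \beta \circ \alpha'$, which is a PBPO match for $\rho$ since $\alpha \circ m = \beta \circ t_L' = t_L$, and run the same pasting and chase in the other direction to obtain $G_L \Rightarrow_\rho^{m,\alpha} G_R$ with $m$ monic, hence a member of ${\pbpomonicstep{\rho}}$.

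For the strongly amendable case, choose the factorization so that, in addition, the amendability square with legs $1_L$ and $t_L'$ is a pullback. Then in the $\subseteq$ argument the produced $\alpha'$ satisfies $\strongmatch{t_L'}{m}{\alpha'}$, so the constructed $\sigma$-step lies in $\pbpostrongstep{\sigma}$, giving ${\pbpomonicstep{\rho}} \subseteq {\pbpostrongstep{\sigma}}$. Conversely, a strong $\sigma$-match forces $m$ monic (it is a pullback of the mono $t_L'$), so the $\supseteq$ argument above applies verbatim, yielding ${\pbpomonicstep{\rho}} = {\pbpostrongstep{\sigma}}$.

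The crux --- and the step I expect to cost the most care --- is the simultaneous-witness bookkeeping: showing that a single object $G_K$, together with the induced $u,g_L,g_R,w,w',G_R$, witnesses both a $\rho$-step and a $\sigma$-step. Every such coincidence ultimately reduces, via the pullback pasting lemma, to the one defining property that $K'_\sigma$ is the pullback of $l'$ along $\beta$; assembling the diagram chases carefully (and using Proposition~\ref{prop:basic:facts:pbpo:restrictions} to normalize matches to monos in the $\supseteq$ direction) is the only real work.
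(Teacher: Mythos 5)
Your proposal is correct and follows essentially the same route as the paper: construct $\sigma$ by pulling back $K'$ along $\beta$ and pushing out along $K \to K''$, then relate the two step diagrams by vertically pasting the pullback (resp.\ pushout) squares through the amendability factorization, with the strong-match claim reducing to the pullback property of the square with legs $1_L$ and $t_L'$. The only difference is that you spell out the diagram chases (uniqueness of $u$, the lift $u'_\sigma$, normalization of matches via Proposition~\ref{prop:basic:facts:pbpo:restrictions}) that the paper leaves implicit in its two displayed diagrams.
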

\begin{proof}
Given rule $\rho$ on the left
\begin{center}
  \begin{tikzpicture}[node distance=13mm,l/.style={inner sep=.5mm},baseline=-6mm]
    \node (L) {$L$};
    \node (K) [right of=L] {$K$}; \draw [->] (K) to node [above] {$l$} (L);
    \node (LP) [below of=L] {$L'$};
      \draw [->] (L) to node [left] {$t_L$} (LP);
    \node (KP) [below of=K] {$K'$};
      \draw [->] (K) to node [right] {$t_K$} (KP);
      \draw [->] (KP) to node [below] {$l'$} (LP);
       \node at ([shift={(-4mm,-4mm)}]K) {PB};
    \node (R) [right of=K] {$R$}; 
      \draw [->] (K) to node [above] {$r$} (R);
    \node (RP) [below of=R] {$R'$};
      \draw [->] (R) to node [right] {$t_R$} (RP);
      \draw [->] (KP) to node [below] {$r'$} (RP);
      \node at ([shift={(-4mm,4mm)}]RP) {PO};
  \end{tikzpicture}
  \qquad
  \raisebox{3.5mm}{
   \begin{tikzpicture}[node distance=16mm,l/.style={inner sep=.5mm},baseline=-6mm,d/.style={node distance=11mm}]
    \node (L) {$L$};
    \node (K) [right of=L] {$K$}; \draw [->] (K) to node [above] {$l$} (L);
    \node (LPP) [below of=L,d] {$L''$};
      \draw [>->] (L) to node [left] {$t_L'$} (LPP);
    \node (KPP) [below of=K,d] {$K''$};
      \draw [>->] (K) to node [right] {$t_K'$} (KPP);
      \draw [->] (KPP) to node [above] {$l''$} (LPP);
      \node at ([shift={(-4mm,-4mm)}]K) {PB};
    \node (R) [right of=K] {$R$}; 
      \draw [->] (K) to node [above] {$r$} (R);
    \node (RPP) [below of=R,d] {$R''$};
      \draw [->] (R) to node [right] {$t_R'$} (RPP);
      \draw [->] (KPP) to node [below] {$r''$} (RPP);
      \node at ([shift={(-4mm,4mm)}]RPP) {PO};

      \node[opacity=0.5] (LP) [below of=LPP,d] {$L'$};
      \node[opacity=0.5] (KP) [below of=KPP,d] {$K'$};
      \node[opacity=0.5] (RP) [below of=RPP,d] {$R'$};
      \draw [->, opacity=0.5] (KP) to node [below] {$l'$} (LP);
      \draw [->, opacity=0.5] (KP) to node [below] {$r'$} (RP);

      \draw [->, opacity=0.5] (LPP) to node [left] {$\beta$} (LP);
      \draw [->, opacity=0.5] (KPP) to node [right] {} (KP);
      \draw [->, opacity=0.5] (RPP) to node [right] {} (RP);

      \node[opacity=0.5] at ([shift={(-4mm,-4mm)}]KPP) {PB};
      \node[opacity=0.5] at ([shift={(-4mm,4mm)}]RP) {PO};

      \draw[->, opacity=0.5] (L) to[looseness=1.4, bend right=70] node [left] {$t_L$} (LP);
  \end{tikzpicture}
  }
\end{center}
we can construct rule $\sigma$ as the upper half of the diagram on the right,
where $L \stackrel{t_L'}{\mono} L'' \stackrel{\beta}{\to} L'$ is the factorization of $t_L$ witnessing strong amendability. Then the first claim ${\pbpomonicstep{\rho}} = {\pbpostep{\sigma}}$ follows by considering the commuting diagram
\begin{center}
  \begin{tikzpicture}[node distance=16mm,l/.style={inner sep=.5mm},baseline=-6mm,d/.style={node distance=11mm}]
    \node (L) {$L$};
    \node (K) [right of=L] {$K$}; 
      \draw [->] (K) to node [above] {$l$} (L);
    \node (R) [right of=K] {$R$}; 
      \draw [->] (K) to node [above] {$r$} (R);

    \node (GL) [below of=L,d] {$G_L$};
      \draw [>->] (L) to node [right] {$m$} (GL);
    \node (GK) [below of=K,d] {$G_K$};
      \draw [->] (K) to node [right] {} (GK);
      \draw [->] (GK) to node [below] {} (GL);
       \node at ([shift={(-4mm,-4mm)}]K) {PB};
    \node (GR) [below of=R,d] {$G_R$};
      \draw [->] (R) to node [right] {} (GR);
      \draw [->] (GK) to node [below] {} (GR);
      \node at ([shift={(-4mm,4mm)}]GR) {PO};
      
    \node (L'') [below of=GL,d] {$L''$};
    \node (K'') [below of=GK,d] {$K''$};
      \draw [->] (K'') to node [above] {$l''$} (L'');
      \draw [->] (GK) to (K'');
      \node at ([shift={(-4mm,-4mm)}]K'') {PB};
    \node (R'') [below of=GR,d] {$R''$};
      \draw [->] (K'') to node [below] {$r''$} (R'');
      \draw [->] (GR) to (R'');
      \node at ([shift={(-4mm,4mm)}]R'') {PO};
      
    \node (L') [below of=L'',d] {$L'$};
      \draw [->] (L'') to node [right] {$\beta$} (L');
    \node (K') [below of=K'',d] {$K'$};
      \draw [->] (K'') to node [right] {} (K');
      \draw [->] (K') to node [below] {$l'$} (L');
       \node at ([shift={(-4mm,-4mm)}]GK) {PB};
    \node (R') [below of=R'',d] {$R'$};
      \draw [->] (K') to node [below] {$r'$} (R');
      \draw [->] (R'') to node [right] {} (R');
      \node at ([shift={(-4mm,4mm)}]R') {PO};
    
    \draw [->,out=180,in=180,looseness=2] (L) to node [left] {$t_L$} (L');
    \draw [->] (GL) to node [right] {$\alpha'$} (L'');
    \draw [>->, bend right=30] (L) to node [left,pos=0.2] {$t_L'$} (L'');
    \draw [->, bend right=30] (GL) to node [left,pos=0.8] {$\alpha$} (L');

    \node (leftL1) [left of=GL] {$L$};
    \node (leftL2) [left of=L''] {$L$};
      \draw [>->] (leftL1) to node [above,pos=0.4] {$m$} (GL);
      \draw [>->] (leftL2) to node [above,pos=0.4] {$t_L'$} (L'');
      \draw [>->] (leftL1) to node [left] {$1_L$} (leftL2);
      \node at ($(leftL1)!.3!(L'')$) {$\dagger$};
  \end{tikzpicture}
\end{center}
and the second claim ${\pbpomonicstep{\rho}} = {\pbpostrongstep{\sigma}}$ for strongly amendable categories follows by observing that the square marked by $\dagger$ is a pullback square.~\qed
\end{proof}
\pagebreak

\store{theorem}{}{thm:pbpo:plus:models:pbpo}{
In locally small, strongly amendable categories in which every morphism $f$ can be factored into an epi $e$ followed by a mono $m$, any PBPO rule $\rho$ can be modeled by a set of \pbpostrong rules. \qedappendix
}

\begin{corollary}
In any locally small topos, any PBPO rule $\rho$ can be modeled by a set of \pbpostrong rules.
\end{corollary}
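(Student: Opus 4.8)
The corollary is immediate from Theorem~\ref{thm:pbpo:plus:models:pbpo}, so the plan is simply to verify that a locally small topos meets its three standing hypotheses. Local smallness is assumed outright. Strong amendability is exactly Corollary~\ref{cor:toposes:strongly:amendable}. And every morphism of a topos factors as an epi followed by a mono --- indeed as a (regular) epi followed by a mono, through its image --- so the epi--mono factorization hypothesis holds as well. With all three in hand, Theorem~\ref{thm:pbpo:plus:models:pbpo} applies verbatim and delivers the claim; the local smallness is what guarantees that the indexing family produced in the theorem's proof is genuinely a \emph{set}.

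Since the substance is really carried by that theorem, whose proof is deferred, let me also indicate how I would argue it, as it is essentially a composition of the two headline results of this subsection. First I would use~\cite{corradini2019pbpo} to replace the given PBPO rule $\rho$ by an equivalent canonical one. Then I would apply Corollary~\ref{corollary:pbpo:monic:matching} --- which consumes local smallness and epi--mono factorizations --- to obtain a set $S$ of canonical PBPO rules with ${\pbpostep{\rho}} = {\bigcup\{{\pbpomonicstep{\sigma}} \mid \sigma \in S\}}$. Conceptually this trades arbitrary matching for monic matching, at the price of splitting $\rho$ into a (possibly infinite) family indexed by the epi-factorizations of its typing, realized through the compacted-rule construction of Definition~\ref{def:pbpo:compacted:rule} and Lemma~\ref{lem:compacted:rule:lemma}.

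Next, for each $\sigma \in S$ I would invoke Lemma~\ref{lemma:pbpo:monic:typing} in its strongly amendable form: there is a monic PBPO rule $\tau_\sigma$ with ${\pbpomonicstep{\sigma}} = {\pbpostrongstep{\tau_\sigma}}$, obtained by replacing the typing of $\sigma$ by the monic leg $t_L'$ of a strong-amendability factorization $L \stackrel{t_L'}{\mono} L'' \stackrel{\beta}{\to} L'$ and re-deriving the interface and rhs types by pullback and pushout along $l'$ and $r'$. Since $\tau_\sigma$ is monic, Proposition~\ref{prop:basic:facts:pbpo:restrictions} identifies ${\pbpostrongstep{\tau_\sigma}}$ with the \pbpostrong rewrite relation ${\Rightarrow_{\tau_\sigma}^{\mathrm{PBPO}^{+}}}$ of the \pbpostrong rule obtained from $\tau_\sigma$ by discarding the pushout data (as observed just after the definition of a monic PBPO rule). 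Chaining the three equalities gives ${\pbpostep{\rho}} = {\bigcup\{{\Rightarrow_{\tau_\sigma}^{\mathrm{PBPO}^{+}}} \mid \sigma \in S\}}$, i.e.\ the \pbpostrong rule set $\{\tau_\sigma \mid \sigma \in S\}$ models $\rho$; specializing to a locally small topos then recovers the corollary.

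I expect the main obstacle to be organizational rather than computational: no single step is hard in isolation (each is a cited lemma), but the hypotheses must be threaded carefully --- one has to confirm that the rules produced by Corollary~\ref{corollary:pbpo:monic:matching} are genuinely \emph{canonical} PBPO rules so that Lemma~\ref{lemma:pbpo:monic:typing} applies to them, that it is \emph{strong} amendability (not mere amendability) that upgrades ${\pbpomonicstep{\sigma}}$ all the way to ${\pbpostrongstep{\tau_\sigma}}$ rather than only to ${\pbpostep{\tau_\sigma}}$, and that "$\rho$ is modeled by a set of \pbpostrong rules" is being read precisely as the identity ${\pbpostep{\rho}} = {\bigcup_\sigma {\Rightarrow_{\tau_\sigma}^{\mathrm{PBPO}^{+}}}}$. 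Beyond these bookkeeping checks, nothing new is required.
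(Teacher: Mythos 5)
Your proof is correct and follows the paper's own route exactly: the paper likewise deduces the corollary from Theorem~\ref{thm:pbpo:plus:models:pbpo} by citing Corollary~\ref{cor:toposes:strongly:amendable} for strong amendability and the standard epi--mono (image) factorization in a topos, and your sketch of the theorem's proof matches the paper's appendix argument via Corollary~\ref{corollary:pbpo:monic:matching}, Lemma~\ref{lemma:pbpo:monic:typing} and Proposition~\ref{prop:basic:facts:pbpo:restrictions}.
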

\begin{proof}
By Corollary~\ref{cor:toposes:strongly:amendable} and the fact that toposes are epi-mono factorizable. \qed
\end{proof}

\section{Category \GraphLattice{(\labels,\leq)}}
\label{sec:category:graph:lattice}

Unless one employs a meta-notation or restricts to unlabeled graphs, as we did in Section~\ref{sec:pbpostrong}, it is sometimes impractical to use \pbpostrong in the category \Graph. The following example illustrates the problem.
\medskip

\noindent
\begin{minipage}[c]{0.78\textwidth}
\begin{example}\label{ex:pbpostrong:cumbersome:in:graph}
    Suppose the set of labels is $\labels = \{0,1\}$. To be able to injectively match pattern
    $
    L = %
    \begin{tikzcd}[column sep=0.5cm]
    0 \arrow[r, "1"] & 0
    \end{tikzcd}
    $
    in any context, one must inject it into the type graph $L'$ shown on the right
    in which every dotted loop represents two edges (one for each label), and every dotted non-loop represents four edges (one for each label, in either direction). In general, to allow any context, one needs  to include $|\labels|$ additional vertices in $L'$, and $|\labels|$ complete graphs over $V_{L'}$.
\end{example}
\end{minipage}\hfill%
\begin{minipage}[c]{0.2\textwidth}
    \hfill
    \begin{tikzcd}
    1 \arrow[d, no head, dotted] \arrow[rd, no head, dotted] \arrow[r, no head, dotted] \arrow[no head, dotted, loop, distance=2em, in=125, out=55] & 0 \arrow[d, no head, dotted] \arrow[ld, no head, dotted] \arrow[no head, dotted, loop, distance=2em, in=125, out=55] \\
    0 \arrow[r, "1"] \arrow[r] \arrow[no head, dotted, loop, distance=2em, in=305, out=235] \arrow[r, no head, dotted, bend right=49]               & 0 \arrow[no head, dotted, loop, distance=2em, in=305, out=235]                                                      
    \end{tikzcd}\hspace*{-4mm}%
\end{minipage}

Beyond this example, and less easily alleviated with meta-notation, in \Graph it is impractical or  impossible to express rules that involve (i)~arbitrary labels (or classes of labels) in the application condition; 
(ii)~relabeling; or (iii)~allowing and 
capturing arbitrary subgraphs (or classes of subgraphs) around a match graph.
As we will discuss in Section~\ref{sec:discussion}, these features have been non-trivial to express in general for algebraic graph rewriting approaches.

We define a category which allows flexibly addressing all of these issues.

\begin{definition}[Complete Lattice]
A \emph{complete lattice} $(\labels, \leq)$ is a poset such that all subsets $S$ of $\labels$ have a supremum (join) $\bigvee S$ and an infimum (meet) $\bigwedge S$. 
\end{definition}

\begin{definition}[\GraphLattice{(\labels,\leq)}]
For a complete lattice $(\labels, \leq)$, we define the category \GraphLattice{(\labels,\leq)}, where objects are graphs labeled from $\labels$, and  arrows are graph premorphisms $\phi : G \to G'$ that satisfy $\lbl_G(x) \leq \lbl_{G'}(\phi(x))$ for all $x \in V_G \cup E_G$. %
\end{definition}

In terms of graph structure, the pullbacks and pushouts in \GraphLattice{(\labels,\leq)} are the usual pullbacks and pushouts in $\Graph$. The only difference is that the labels that are identified by respectively the cospan and span are replaced by their meet and join, respectively.

The sufficient condition of Proposition~\ref{prop:pmc:strong:amendability} does not hold in \GraphLattice{(\labels,\leq)}. Nonetheless, we have the following result.

\store{lemma}{}{lem:graph:lattice:strongly:amendable}{
  \GraphLattice{(\labels,\leq)} is strongly amendable.\qedappendix
}

One very simple but extremely useful complete lattice is the following.

\begin{definition}[Flat Lattice]
Let $\labels^{\bot,\top} = \labels \uplus \{ \bot, \top \}$. We define the \emph{flat lattice} induced by $\labels$ as the smallest poset 
\pagebreak
$(\labels^{\bot,\top}, {\leq})$, which has $\bot$ as a global minimum and $\top$ as a global maximum (so in particular, the elements of $\labels$ are incomparable). In this context, we refer to $\labels$ as the \emph{base label set}.
\end{definition}

One feature flat lattices provide is a kind of ``wildcard element'' $\top$.
\medskip

\noindent
\begin{minipage}[c]{0.65\textwidth}
    \begin{example}[Wildcards]
        Using flat lattices, $L'$ of Example~\ref{ex:pbpostrong:cumbersome:in:graph} can be fully expressed for any base label set $\mathcal{L} \ni 0,1$ as shown on the right (node identities are omitted).
        The visual syntax and naming shorthands of PGR~\cite{overbeek2020patch} (or variants thereof) could be leveraged to simplify the notation further.
    \end{example}
\end{minipage}\hfill%
\begin{minipage}[c]{0.31\textwidth}
    \hfill
        \begin{tikzcd}[row sep=0.3cm, column sep=0.6cm]
                                                                                                                                                                                              & \top \arrow["\top" description, loop, distance=2em, in=125, out=55] \arrow[ld, "\top" description, bend right] \arrow[rd, "\top" description, bend left] &                                                                                                                                                           \\
        0 \arrow[rr, "1" description] \arrow[ru, "\top" description, bend left=67] \arrow[rr, "\top" description, bend right] \arrow["\top" description, loop, distance=2em, in=215, out=145] &                                                                                                                                                          & 0 \arrow[lu, "\top" description, bend right=67] \arrow[ll, "\top" description, bend right] \arrow["\top" description, loop, distance=2em, in=35, out=325]
        \end{tikzcd}
\end{minipage}
\medskip

As the following example illustrates, the expressive power of a flat lattice stretches beyond wildcards: it also enables relabeling of graphs. (Henceforth, we will depict a node $x$ with label $u$ as $x^u$.)

\begin{example}[Relabeling]
\label{example:hard:overwriting}
  As vertex labels we employ the flat lattice induced by the set $\{\, a,b,c,\ldots \,\}$, and assume edges are unlabeled for notational simplicity.
  The following diagram displays a rule ($L,L',K,K',R$)  for overwriting an arbitrary vertex's label with $c$, in any context. We include an application to an example graph in the middle row:
  {%

\newcommand{\nodexa}{\vertex{x_1}{cblue!20}}
\newcommand{\nodexb}{\vertex{x_2}{cblue!20}}
\newcommand{\nodexc}{\vertex{x_3}{cblue!20}}
\newcommand{\nodexd}{\vertex{x_4}{cblue!20}}

\newcommand{\nodea}{\vertex{a}{cgreen!20}}
\newcommand{\nodeb}{\vertex{b}{cpurple!25}}

\newcommand{\nodeaa}{\vertex{a_1}{cgreen!20}}
\newcommand{\nodeab}{\vertex{a_2}{cgreen!20}}
\newcommand{\nodeba}{\vertex{b_1}{cpurple!25}}
\newcommand{\nodebb}{\vertex{b_2}{cpurple!25}}

\newcommand{\nodex}{\vertex{x}{cblue!20}}

\newcommand{\nodez}{\vertex{z}{cred!10}}

\begin{center}
  \scalebox{\rulescale}{
  \begin{tikzpicture}[->,node distance=12mm,n/.style={}]
    \graphbox{$L$}{0mm}{0mm}{35mm}{10mm}{-4mm}{-5.5mm}{
      \node [npattern] (x)
      {\nodex};
       \annotate{x}{$\bot$};
    }
    \graphbox{$K$}{36mm}{0mm}{35mm}{10mm}{-4mm}{-5.5mm}{
      \node [npattern] (x)
      {\nodex};
      \annotate{x}{$\bot$};
    }
    \graphbox{$R$}{72mm}{0mm}{35mm}{10mm}{-4mm}{-5.5mm}{
      \node [npattern] (x)
      {\nodex};
       \annotate{x}{$c$};
    }
    \graphbox{$G_L$}{0mm}{-11mm}{35mm}{10mm}{-4mm}{-5.5mm}{
      \node [npattern] (x)
      {\nodex};
       \annotate{x}{$a$};
       \node [npattern] (z) [right of=x] {\nodez};
       \annotate{z}{$b$};
        \draw [epattern] (x) to node {} (z);
    }
    \graphbox{$G_K$}{36mm}{-11mm}{35mm}{10mm}{-4mm}{-5.5mm}{
      \node [npattern] (x)
      {\nodex};
      \annotate{x}{$\bot$};
      \node [npattern] (z) [right of=x] {\nodez};
       \annotate{z}{$b$};
        \draw [epattern] (x) to node {} (z);
    }
    \graphbox{$G_R$}{72mm}{-11mm}{35mm}{10mm}{-4mm}{-5.5mm}{
      \node [npattern] (x)
      {\nodex};
      \annotate{x}{$c$};
      \node [npattern] (z) [right of=x] {\nodez};
       \annotate{z}{$b$};
        \draw [epattern] (x) to node {} (z);
    }
    \graphbox{$L'$}{0mm}{-22mm}{35mm}{11.3mm}{-4mm}{-6mm}{
      \node [npattern] (x)
      {\nodex};
       \annotate{x}{$\top$};
        \draw [eset,loop=180,looseness=3] (x) to node {} (x);
        \node [npattern] (z) [right of=x] {\nodez};
       \annotate{z}{$\top$};
        \draw [eset] (x) to [bend right=20] node {} (z);
         \draw [eset] (z) to [bend right=10] node {} (x);
         \draw [eset,loop=-20,looseness=3] (z) to node {} (z);
    }
    \graphbox{$K'$}{36mm}{-22mm}{35mm}{11.3mm}{-4mm}{-6mm}{
      \node [npattern] (x)
      {\nodex};
       \annotate{x}{$\bot$};
        \draw [eset,loop=180,looseness=3] (x) to node {} (x);
        \node [npattern] (z) [right of=x] {\nodez};
       \annotate{z}{$\top$};
        \draw [eset] (x) to [bend right=20] node {} (z);
         \draw [eset] (z) to [bend right=10] node {} (x);
         \draw [eset,loop=-20,looseness=3] (z) to node {} (z);
    }
    \transparentgraphbox{$R'$}{72mm}{-22mm}{35mm}{11.3mm}{-4mm}{-6mm}{
      \node [npattern] (x)
      {\nodex};
       \annotate{x}{$c$};
        \draw [eset,loop=180,looseness=3] (x) to node {} (x);
        \node [npattern] (z) [right of=x] {\nodez};
       \annotate{z}{$\top$};
        \draw [eset] (x) to [bend right=20] node {} (z);
         \draw [eset] (z) to [bend right=10] node {} (x);
         \draw [eset,loop=-20,looseness=3] (z) to node {} (z);
    }
  \end{tikzpicture}
  }
\end{center}

}%
  \noindent
  The example demonstrates how (i)~labels in $L$ serve as lower bounds for matching, (ii)~labels in $L'$ serve as upper bounds for matching, (iii)~labels in $K'$ can be used to decrease matched labels (so in particular, $\bot$ ``instructs'' to ``erase'' the label and overwrite it with $\bot$, and $\top$ ``instructs'' to preserve labels), and (iv)~labels in $R$ can be used to increase labels.
\end{example}

Complete lattices also support modeling sorts.
\medskip

\noindent%
\begin{minipage}[c]{.71\textwidth}%
  \useparinfo{parinfo}%
  \begin{example}[Sorts]
  Let $p_1, p_2,\ldots \in \mathbb{P}$ be a set of processes and $d_1,d_2,\ldots \in \mathbb{D}$ a set of data elements. Assume a complete lattice over labels $\mathbb{P} \cup \mathbb{D} \cup \{ \mathbb{P}, \mathbb{D}, \rhd, @ \}$, arranged as in the diagram on the right.
  
  Moreover, assume that the vertices $x,y,\ldots$ in the graphs of interest are labeled with a $p_i$ or $d_i$, and that edges are labeled with a $\rhd$ or $@$. In such a graph,
  \end{example}
\end{minipage}\hfill%
\begin{minipage}[c]{.22\textwidth}
    $\forall i \in \mathbb{N}:$\\[.5ex]
    \begin{tikzcd}[column sep=1.5mm,row sep=0.2cm]
                          & \top                                                                       &                 &               \\
    \mathbb{P} \arrow[ru] & \mathbb{D} \arrow[u]                                                       & \rhd \arrow[lu] & @ \arrow[llu] \\
    p_i \arrow[u]         & d_i \arrow[u]                                                              &                 &               \\
                          & \bot \arrow[ruu, bend right] \arrow[u] \arrow[lu] \arrow[rruu, bend right] &                 &              
    \end{tikzcd}
    \vspace{-1.5ex}
\end{minipage}
    \begin{itemize}
        \item 
          an edge $x^{d_i} \xrightarrow{@} y^{p_j}$ encodes that process $p_j$ holds a local copy of datum $d_i$ ($x$ will have no other connections); and
        \item 
          a chain of edges $x^{p_i} \xrightarrow{\rhd} y^{d_k} \xrightarrow{\rhd} z^{d_l} \xrightarrow{\rhd} \cdots \xrightarrow{\rhd} u^{p_j}$ encodes a directed FIFO channel from process $p_i$ to process $p_j \neq p_i$, containing a sequence of elements $d_k, d_l, \ldots$. An empty channel is modeled as $x^{p_i} \xrightarrow{\rhd} u^{p_j}$.
          \pagebreak 
    \end{itemize}
    Receiving a datum through an incoming channel (and storing it locally) can be modeled using the following rule:
    {%

\newcommand{\nodexa}{\vertex{x_1}{cblue!20}}
\newcommand{\nodexb}{\vertex{x_2}{cblue!20}}
\newcommand{\nodexc}{\vertex{x_3}{cblue!20}}
\newcommand{\nodexd}{\vertex{x_4}{cblue!20}}

\newcommand{\nodea}{\vertex{a}{cgreen!20}}
\newcommand{\nodeb}{\vertex{b}{cpurple!25}}

\newcommand{\nodeaa}{\vertex{a_1}{cgreen!20}}
\newcommand{\nodeab}{\vertex{a_2}{cgreen!20}}
\newcommand{\nodeba}{\vertex{b_1}{cpurple!25}}
\newcommand{\nodebb}{\vertex{b_2}{cpurple!25}}

\newcommand{\nodex}{\vertex{x}{cblue!20}}

\newcommand{\nodez}{\vertex{z}{cred!10}}

\newcommand{\nodey}{\vertex{y}{cgreen!20}}

\begin{center}
  \scalebox{\rulescale}{
  \begin{tikzpicture}[->,node distance=12mm,n/.style={}]
    \graphbox{$L$}{0mm}{0mm}{34mm}{10mm}{-6mm}{-5.5mm}{
      \node [npattern] (x)
      {\nodexa \ \nodexb};
       \annotate{x}{$\bot$};
       \node [npattern] (y) [right of=x,xshift=4mm] {\nodey};
       \annotate{y}{$\bot$};
       \draw [epattern] (x) to node [above,label,inner sep=0.5mm] {$\rhd$} (y);
    }
    \graphbox{$L'$}{0mm}{-11mm}{34mm}{22mm}{-6mm}{-6mm}{
      \node [npattern] (x)
      {\nodexa \ \nodexb};
       \annotate{x}{$\mathbb{D}$};
        \node [npattern] (y) [right of=x, xshift=4mm] {\nodey};
       \annotate{y}{$\mathbb{P}$};
       \draw [epattern] (x) to node [above,label,inner sep=0.5mm] {$\rhd$} (y);
       \node [nset] (z) at ($(x)!0.5!(y)$) [yshift=-10mm] {\nodez};
       \annotate{z}{$\top$};
       \draw [eset,loop=-150,looseness=3] (z) to node [left,label,inner sep=0.5mm] {$\top$} (z);
       \draw [eset] (z) to [bend left=40] node [below left,label,inner sep=0.5mm] {$\top$} (x);
       \draw [eset] (z) to[bend right=50] node [below right,label,inner sep=0.5mm] {$\top$} (y);
       \draw [eset] (y) to[bend right=30] node [ left,label,inner sep=0.5mm] {$\top$} (z);
    }
    \graphbox{$K'$}{35mm}{-11mm}{40mm}{22mm}{-11mm}{-6mm}{
      \node [npattern] (x)
      {\nodexa};
       \annotate{x}{$\bot$};
        \node [npattern] (y) [right of=x] {\nodey};
       \annotate{y}{$\mathbb{P}$};
       \node [nset] (z) at ($(x)!0.5!(y)$) [yshift=-10mm] {\nodez};
       \annotate{z}{$\top$};
       \draw [eset,loop=-150,looseness=3] (z) to node [left,label,inner sep=0.5mm] {$\top$} (z);
       \draw [eset] (z) to [bend left=40] node [below left,label,inner sep=0.5mm] {$\top$} (x);
       \draw [eset] (z) to[bend right=50] node [below right,label,inner sep=0.5mm] {$\top$} (y);
       \draw [eset] (y) to[bend right=30] node [ left,label,inner sep=0.5mm] {$\top$} (z);
       \node [npattern] (x2) [right of=y] {$\nodexb$};
       \annotate{x2}{$\mathbb{D}$};
    }
    \graphbox{$K$}{35mm}{0mm}{40mm}{10mm}{-11mm}{-5.5mm}{
      \node [npattern] (x)
      {\nodexa};
       \annotate{x}{$\bot$};
        \node [npattern] (y) [right of=x] {\nodey};
       \annotate{y}{$\bot$};
       \node [npattern] (x2) [right of=y] {$\nodexb$};
       \annotate{x2}{$\bot$};
    }
    \graphbox{$R$}{76mm}{0mm}{40mm}{10mm}{-6mm}{-5.5mm}{
      \node [npattern] (x)
      {\nodexa \ \nodey};
       \annotate{x}{$\bot$};
       \node [npattern] (x2) [right of=x, xshift=4mm] {$\nodexb$};
       \annotate{x2}{$\bot$};
       \draw [epattern] (x2) to node [above,label,inner sep=0.5mm] {$@$} (x);
    }
    \transparentgraphbox{$R'$}{76mm}{-11mm}{40mm}{22mm}{-6mm}{-6mm}{
      \node [npattern] (x)
      {\nodexa \ \nodey};
       \annotate{x}{$\mathbb{P}$};
       \node [nset] (z) [yshift=-10mm] {\nodez};
       \annotate{z}{$\top$};
       \draw [eset,loop=-150,looseness=3] (z) to node [left,label,inner sep=0.5mm] {$\top$} (z);
       \draw [eset] (z) to [bend left=70] node [below left,label,inner sep=0.5mm] {$\top$} (x);
       \draw [eset] (z) to[bend right=70,looseness=1.5] node [below right,label,inner sep=0.5mm] {$\top$} (x);
       \draw [eset] (x) to[bend right=20] node [ left,label,inner sep=0.5mm] {$\top$} (z);
        \node [npattern] (x2) [right of=x, xshift=4mm] {$\nodexb$};
       \annotate{x2}{$\mathbb{D}$};
       \draw [epattern] (x2) to node [above,label,inner sep=0.5mm] {$@$} (x);
    }
  \end{tikzpicture}
  }
\end{center}

}%
    The rule illustrates how sorts can improve readability and provide type safety. For instance, the label $\mathbb{D}$ in $L'$ prevents empty channels from being matched. More precisely, always the last element $d$ of a non-empty channel is matched. $K'$ duplicates the node holding $d$: for duplicate $x_1$, the label is forgotten but the connection to the context retained, allowing it to be fused with $y$; and for $x_2$, the connection is forgotten but the label retained, allowing it to be connected to $y$ as an otherwise isolated node. 

Finally, a very powerful feature provided by the coupling of \pbpostrong and \GraphLattice{(\labels,\leq)} is the ability to model a general notion of variable. This is achieved by using multiple context nodes in $L'$ (i.e., nodes not in the image of $t_L$).

\newcommand{\upcxtnode}{\mathcal{C}}

\begin{example}[Variables]
The rule
$f(g(x),y) \to h(g(x), g(y), x)$
on ordered trees
can be precisely modeled in \pbpostrong by the rule
{%

\newcommand{\noder}{\vertex{r}{cred!20}}
\newcommand{\nodex}{\vertex{v}{cblue!20}}
\newcommand{\nodey}{\vertex{w}{cgreen!20}}
\newcommand{\nodeu}{\vertex{u}{cred!10}}
\newcommand{\nodeva}{\vertex{x_1}{corange!20}}
\newcommand{\nodevb}{\vertex{x_2}{corange!20}}
\newcommand{\nodevap}{\vertex{x'_1}{corange!10}}
\newcommand{\nodevbp}{\vertex{x'_2}{corange!10}}
\newcommand{\nodew}{\vertex{y}{cpurple!25}}
\newcommand{\nodewp}{\vertex{y'}{cpurple!12}}

\begin{center}
  \scalebox{\rulescale}{
  \begin{tikzpicture}[->,node distance=12mm,n/.style={}]
    \graphbox{$L$}{0mm}{0mm}{39mm}{30mm}{3mm}{-6mm}{
      \node [npattern] (x) [short] {\nodex}; \annotate{x}{$f$};
      \node (c) [below of=x,short] {};
      \node [npattern] (y) [left of=c,short] {\nodey}; \annotate{y}{$g$};
      \node [npattern] (w) [right of=c,short] {\nodew}; \annotate{w}{$\bot$};
      \node [npattern] (v) [below of=y,short]
      {\nodeva \ \nodevb}; \annotate{v}{$\bot$};
      
      \draw [epattern] (x) to[out=-160,in=90] node [above,label,inner sep=1mm] {$1$} (y);
      \draw [epattern] (y) to node [left,label] {$1$} (v);
      \draw [epattern] (x) to[out=-20,in=90] node [above,label,inner sep=1mm] {$2$} (w);
    }
    \graphbox{$K$}{40mm}{0mm}{39mm}{30mm}{0mm}{-6mm}{
      \node [npattern] (x) [short] {\nodex}; \annotate{x}{$\bot$};
      \node (c) [below of=x,short] {};
      \node (y) [left of=c,short] {}; 
      \node [npattern] (w) [right of=c,short] {\nodew}; \annotate{w}{$\bot$};

      \node [npattern] (v1) [below of=y,short] {\nodeva}; \annotate{v1}{$\bot$};
      \node [npattern] (v2) [right of=v1,short] {\nodevb}; \annotate{v2}{$\bot$};
    }
    \graphbox{$R$}{80mm}{0mm}{39mm}{30mm}{0mm}{-6mm}{
      \node [npattern] (x) [short] {\nodex}; \annotate{x}{$h$};
      \node [npattern] (z2) [below of=x,short] {$z_2$}; \annotate{z2}{$g$};
      \node [npattern] (z1) [left of=z2,short] {$z_1$}; \annotate{z1}{$g$};
      \node [npattern] (v2) [right of=z2,short] {\nodevb}; \annotate{v2}{$\bot$};
      \node [npattern] (w) [below of=z2,short] {\nodew}; \annotate{w}{$\bot$};
      \node [npattern] (v1) [below of=z1,short] {\nodeva}; \annotate{v1}{$\bot$};

      \draw [epattern] (x) to[out=-160,in=90] node [above,label,inner sep=.5mm] {$1$} (z1);
      \draw [epattern] (x) to node [left,label,inner sep=0.5mm] {$2$} (z2);
      \draw [epattern] (x) to[out=-20,in=90] node [above,label,inner sep=0.5mm] {$3$} (v2);
      \draw [epattern] (z1) to node [left,label,inner sep=0.5mm] {$1$} (v1);
      \draw [epattern] (z2) to node [left,label,inner sep=0.5mm] {$1$} (w);
    }
    \graphbox{$L'$}{0mm}{-31mm}{39mm}{50mm}{3mm}{-6mm}{
      \node [nset] (u) [short] {\nodeu}; \annotate{u}{$\top$};
      \node [npattern] (x) [below of=u,short] {\nodex}; \annotate{x}{$f$};
      \node (c) [below of=x,short] {};
      \node [npattern] (y) [left of=c,short] {\nodey}; \annotate{y}{$g$};
      \node [npattern] (w) [right of=c,short] {\nodew}; \annotate{w}{$\top$};
      \node [npattern] (v) [below of=y,short] {\nodeva \ \nodevb}; \annotate{v}{$\top$};
      \node [nset] (v') [below of=v,short] {\nodevap \ \nodevbp}; \annotate{v'}{$\top$};
      \node [nset] (w') [below of=w,short] {\nodewp}; \annotate{w'}{$\top$};
      
      \draw [eset] (u) to node [left,label] {$\top$} (x);
      \draw [eset,loop=180,looseness=3] (u) to node [left,label] {$\top$} (u);
      \draw [epattern] (x) to[out=-160,in=90] node [above,label,inner sep=1mm] {$1$} (y);
      \draw [epattern] (y) to node [left,label] {$1$} (v);
      \draw [eset] (v) to node [left,label] {$\top$} (v');
      \draw [eset,thinloop=180,looseness=2.5] (v') to node [left,label] {$\top$} (v');
      \draw [epattern] (x) to[out=-20,in=90] node [above,label,inner sep=1mm] {$2$} (w);
      \draw [eset] (w) to node [left,label] {$\top$} (w');
      \draw [eset,loop=180,looseness=3] (w') to node [left,label] {$\top$} (w');
    }
    \graphbox{$K'$}{40mm}{-31mm}{39mm}{50mm}{2mm}{-6mm}{
      \node [nset] (u) [short] {\nodeu}; \annotate{u}{$\top$};
      \node [npattern] (x) [below of=u,short] {\nodex}; \annotate{x}{$\bot$};
      \node (c) [below of=x,short] {};
      \node (y) [left of=c] {};
      \node [npattern] (w) [right of=c] {\nodew}; \annotate{w}{$\top$};
      \node [nset] (w') [below of=w,short] {\nodewp}; \annotate{w'}{$\top$};
      \node [npattern] (v1) [below of=y,short] {\nodeva}; \annotate{v1}{$\top$};
      \node [npattern] (v2) [right of=v1] {\nodevb}; \annotate{v2}{$\top$};
      \node [nset] (v1') [below of=v1,short] {\nodevap}; \annotate{v1'}{$\top$};
      \node [nset] (v2') [below of=v2,short] {\nodevbp}; \annotate{v2'}{$\top$};
      
      \draw [eset] (u) to node [left,label] {$\top$} (x);
      \draw [eset] (v1) to node [left,label] {$\top$} (v1');
      \draw [eset,loop=180,looseness=3] (v1') to node [left,label] {$\top$} (v1');
      \draw [eset] (v2) to node [left,label] {$\top$} (v2');
      \draw [eset,loop=180,looseness=3] (v2') to node [left,label] {$\top$} (v2');
      \draw [eset,loop=180,looseness=3] (u) to node [left,label] {$\top$} (u);
      \draw [eset] (w) to node [left,label] {$\top$} (w');
      \draw [eset,loop=180,looseness=3] (w') to node [left,label] {$\top$} (w');
    }
    \transparentgraphbox{$R'$}{80mm}{-31mm}{39mm}{50mm}{2mm}{-6mm}{
      \node [nset] (u) [short] {\nodeu}; \annotate{u}{$\top$};
      \node [npattern] (x) [below of=u,short] {\nodex}; \annotate{x}{$h$};
      \node [npattern] (z2) [below of=x,short] {$z_2$}; \annotate{z2}{$g$};
      \node [npattern] (z1) [left of=z2] {$z_1$}; \annotate{z1}{$g$};
      \node [npattern] (v2) [right of=z2] {\nodevb}; \annotate{v2}{$\top$};
      \node [npattern] (w) [below of=z2,short] {\nodew}; \annotate{w}{$\top$};
      \node [nset] (w') [below of=w,short] {\nodewp}; \annotate{w'}{$\top$};
      \node [npattern] (v1) [below of=z1,short] {\nodeva}; \annotate{v1}{$\top$};
      \node [nset] (v1') [below of=v1,short] {\nodevap}; \annotate{v1'}{$\top$};
      \node [nset] (v2') [below of=v2,short] {\nodevbp}; \annotate{v2'}{$\top$};
      
      \draw [eset] (u) to node [left,label] {$\top$} (x);
      \draw [eset] (v1) to node [left,label] {$\top$} (v1');
      \draw [eset,loop=180,looseness=3] (v1') to node [left,label] {$\top$} (v1');
      \draw [eset] (v2) to node [left,label] {$\top$} (v2');
      \draw [eset,loop=180,looseness=3] (v2') to node [left,label] {$\top$} (v2');
      \draw [eset,loop=180,looseness=3] (u) to node [left,label] {$\top$} (u);
      \draw [eset] (w) to node [left,label] {$\top$} (w');
      \draw [eset,loop=180,looseness=3] (w') to node [left,label] {$\top$} (w');

      \draw [epattern] (x) to[out=-160,in=90] node [above,label,inner sep=.5mm] {$1$} (z1);
      \draw [epattern] (x) to node [left,label,inner sep=0.5mm] {$2$} (z2);
      \draw [epattern] (x) to[out=-20,in=90] node [above,label,inner sep=0.5mm] {$3$} (v2);
      \draw [epattern] (z1) to node [left,label,inner sep=0.5mm] {$1$} (v1);
      \draw [epattern] (z2) to node [left,label,inner sep=0.5mm] {$1$} (w);
    }
  \end{tikzpicture}
  }
\end{center}

}%
\noindent if one restricts the set of rewritten graphs to straightforward representations of trees: nodes are labeled by symbols, and edges are labeled by $n \in \mathbb{N}$, the position of its target (argument of the symbol).
\end{example}

\begin{remark}[On Adhesivity]
A category is \emph{adhesive}~\cite{lack2004adhesive} if (i)~it has all pullbacks, (ii)~it has all pushouts along monos, and (iii)~pushouts along monos are stable and are pullbacks~\cite[Theorem 3.2]{garner2012axioms}. Adhesivity implies the uniqueness of pushout complements (up to isomorphism)~\cite[Lemma 4.5]{lack2004adhesive}, which in turn ensures that DPO rewriting in adhesive categories is deterministic. Moreover, certain meta-properties of interest (such as the Local Church-Rosser Theorem and the Concurrency Theorem) hold for DPO rewriting in adhesive categories (and many graphical structures are adhesive). For these reasons, DPO and adhesivity are closely related in the literature.

We make two observations in connection to adhesivity. First, \pbpostrong rewriting makes strictly weaker assumptions than DPO rewriting: it is enough to assume conditions (i) and (ii) above. Second, \GraphLattice{(\labels,\leq)} is non-adhesive for any choice of complete lattice in which the maximum $\top$ and minimum $\bot$ are distinct: the square
\begin{center}
\begin{tikzcd}
\bot \arrow[d, tail] \arrow[r, tail] & \top \arrow[d, tail] \\
X \arrow[r, tail]                 & \top                
\end{tikzcd}
\end{center}
is a pushout along a mono for $X \in \{\bot,\top\}$, and hence admits two pushout complements. Thus, not only can \pbpostrong be applied to non-adhesive categories, \GraphLattice{(\labels,\leq)} is a graphical non-adhesive category with practical relevance. To what extent meta-properties of interest carry over to \pbpostrong/\GraphLattice{(\labels,\leq)} rewriting from DPO rewriting on adhesive categories is left for future work.
\end{remark}

\section{Discussion}
\label{sec:discussion}

We discuss our rewriting (Section~\ref{sec:discussion:rewriting}) and relabeling (Section~\ref{sec:discussion:relabeling}) contributions in turn.

\subsection{Rewriting}
\label{sec:discussion:rewriting}

Unlike other algebraic approaches such as DPO~\cite{ehrig1973graph}, SPO~\cite{lowe1993algebraic}, SqPO~\cite{corradini2006sesqui} and AGREE~\cite{corradini2015agree}, computing a rewrite step in PBPO and \pbpostrong requires only a basic understanding of constructing pullbacks and pushouts. Moreover, assuming monic matching, and under some mild restrictions (DPO is left-linear, and SPO uses conflict-free matches), Corradini et al.~\cite{corradini2006sesqui, corradini2019pbpo} have shown that
\[
\text{DPO} < \text{SPO} < \text{SqPO} < \text{AGREE} < \text{PBPO}
\]
where $\mathcal{F} < \mathcal{G}$ means that any $\mathcal{F}$ rule $\rho$ can be simulated by a $\mathcal{G}$ rule $\sigma$, and where \emph{simulation} means
${\Rightarrow_\rho^\mathcal{F}} \subseteq {\Rightarrow_\sigma^\mathcal{G}}$ for the generated rewrite relations. This chain may now be extended by inserting $\text{AGREE} < \text{\pbpostrong} < \text{PBPO}$.

\newcommand{\precc}{\stackrel{\star}{\prec}}
If instead of simulation one uses $\emph{modeling}$ as the expressiveness criterion, i.e., $\subseteq$ is strengthened to $=$, then the situation is different. Writing $\prec$ for this expressiveness relation, we conjecture that in $\Graph$ and with monic matching (which implies conflict-freeness of SPO matches)
\begin{center}
    \begin{tikzpicture}[default,nodes={rectangle,inner sep=3mm}]
        \node (t) {\pbpostrong};
        \node (l) at (t.west) [anchor=east,yshift=-3.5mm]  
          {$\text{SPO} \prec \text{SqPO} \prec \text{AGREE}$};
        \node (r) at (t.east) [anchor=west,xshift=-2mm,yshift=-3.5mm]  
          {$\text{DPO}$};
        \node (b) at ($(t)+(0,-7mm)$) {PBPO};
        \node at ($(l.north east)!0.5!(t.south west)$) [rotate=35] {$\prec$};
        \node at ($(r.north west)!0.5!(t.south east) + (-.5mm,0mm)$) [rotate=180-35] {$\prec$};
        \node at ($(t)!0.5!(b)$) [rotate=90] {$\prec$};
    \end{tikzpicture}
\end{center}
holds,\footnote{%
The modeling of DPO and SPO rules for category \Graph in \pbpostrong is similar to the approach described in our paper on PGR~\cite{overbeek2020patch}.}
and the other comparisons do not hold. $\text{PBPO} \prec \text{\pbpostrong}$ follows from Lemma~\ref{lemma:pbpo:monic:typing} and the fact that $\Graph$ is strongly amendable, and PBPO does not stand in any other modeling relation due to uncontrolled global effects (in particular, a straightforward adaptation of Example~\ref{ex:pbpo:remove:loop} shows $\text{DPO} \not\prec \text{PBPO}$).

Other graph rewriting approaches that bear certain similarities to \pbpostrong (see also the discussion in~\cite{corradini2019pbpo}) include the double-pullout graph rewriting approach by Kahl~\cite{kahl2010amalgamating}; variants of the aforementioned formalisms, such as the cospan SqPO approach by Mantz~\cite[Section 4.5]{mantz2014phd}; and the recent drag rewriting framework by  Dershowitz and Jouannaud~\cite{dershowitz2019drags}. Double-pullout graph rewriting also uses pullbacks and pushouts to delete and duplicate parts of the context (extending DPO), but the approach is defined in the context of collagories~\cite{kahl2011collagories}, and to us it is not yet clear in what way the two approaches relate. Cospan SqPO can be understood as being almost dual to SqPO: rules are cospans, and transformation steps consists of a pushout followed by a final pullback complement. An interesting question is whether \pbpostrong can also model cospan SqPO. Drag rewriting is a non-categorical approach to generalizing term rewriting, and like \pbpostrong, allows relatively fine control over the interface between pattern and context, thereby avoiding issues related to dangling pointers and the construction of pushout complements. Because drag rewriting is non-categorical and drags have inherently more structure than graphs, it is difficult to relate \pbpostrong and drag rewriting precisely. These could all be topics for future investigation.

Finally, let us just note that the combination of \pbpostrong and \GraphLattice{(\labels, \leq)} does not provide a strict generalization of Patch Graph Rewriting (PGR)~\cite{overbeek2020patch}, our conceptual precursor to \pbpostrong (Section~\ref{sec:introduction}). This is because patch edge endpoints that lie in the context graph can be redefined in PGR (e.g., the direction of edges between context and pattern can be inverted), but not in \pbpostrong. 
Beyond that, \pbpostrong is more general and expressive. Therefore, at this point we believe that the most distinguishing and redeeming feature of PGR is its visual syntax, which makes rewrite systems much easier to define and communicate. In order to combine the best of both worlds, our aim is to define a similar syntax for (a suitable restriction of) \pbpostrong in the future.

\subsection{Relabeling}
\label{sec:discussion:relabeling}

The coupling of \pbpostrong and \GraphLattice{(\labels, \leq)} allows relabeling and modeling sorts and variables with relative ease, and does not require a modification of the rewriting framework. Most existing approaches study these topics in the context of DPO, where the requirement to ensure the unique existence of a pushout complement requires restricting the method and proving non-trivial properties:
\begin{itemize}
    \item 
      Parisi-Presicce et al.~\cite{parisi1986graph} limit DPO rules $L \leftarrow K \to R$ to ones where $K \to R$ is monic (meaning merging is not possible), and where some set-theoretic consistency condition is satisfied. Moreover, the characterization of the existence of rewrite step has been shown to be incorrect~\cite{habel2002relabelling}, supporting our claim that pushout complements are not easy to reason about.
    \item 
      Habel and Plump~\cite{habel2002relabelling} study relabeling using the category of partially labeled graphs. They allow non-monic morphisms $K \to R$, but they nonetheless add two restrictions to the definition of a DPO rewrite rule. Among others, these conditions do not allow hard overwriting arbitrary labels as in Example~\ref{example:hard:overwriting}. Moreover, the pushouts of the DPO rewrite step must be restricted to pushouts that are also pullbacks.
      Finally, unlike the approach suggested by Parisi-Presice et al., Habel and Plump's approach does not support modeling notions of sorts and variables.
      
      Our conjecture that \pbpostrong can model DPO in $\Graph$ extends to this relabeling approach in the following sense: given a DPO rule over graphs partially labeled from $\labels$ that moreover satisfies the criteria of~\cite{habel2002relabelling}, we conjecture that there exists a \pbpostrong rule in \GraphLattice{(\labels^{\bot,\top}, \leq)} that models the same rewrite relation when restricting to graphs totally labeled over the base label set $\labels$.
\end{itemize}
Later publications largely appear to build on the approach~\cite{habel2002relabelling} by Habel and Plump. For example, Schneider~\cite{schneider2005changing} gives a non-trivial categorical formulation; Hoffman~\cite{hoffman2005graph} proposes a two-layered (set-theoretic) approach to support variables; and Habel and Plump~\cite{habel2012adhesive} generalize their approach to $\mathcal{M}, \mathcal{N}$-adhesive systems (again restricting $K \to R$ to monic arrows).

The transformation of attributed structures has been explored in a very general setting by Corradini et al.~\cite{corradini2019pbpo}, which involves a comma category construction and suitable restrictions of the PBPO notions of rewrite rule and rewrite step. We leave relating their and our approach to future work.

\subsubsection*{Acknowledgments} We thank Andrea Corradini and anonymous reviewers for useful discussions, suggestions and corrections.
We would also like to thank Michael Shulman, who identified the sufficient conditions for amendability for us~\cite{schulman2021mathoverflow}.
The authors received funding from the Netherlands Organization for Scientific Research (NWO) under the Innovational Research Incentives Scheme Vidi (project.\ No.\ VI.Vidi.192.004).

\bibliographystyle{unsrt}
\bibliography{main}

\begin{thebibliography}{10}

\bibitem{overbeek2020patch}
R.~Overbeek and J.~Endrullis.
\newblock Patch graph rewriting.
\newblock In {\em Proc.\ Conf.\ on Graph Transformation ({ICGT})}, volume 12150
  of {\em LNCS}, pages 128--145. Springer, 2020.

\bibitem{corradini2019pbpo}
A.~Corradini, D.~Duval, R.~Echahed, F.~Prost, and L.~Ribeiro.
\newblock The {PBPO} graph transformation approach.
\newblock {\em J.\ Log.\ Algebraic Methods Program.}, 103:213--231, 2019.

\bibitem{ehrig1973graph}
H.~Ehrig, M.~Pfender, and H.~J. Schneider.
\newblock Graph-grammars: An algebraic approach.
\newblock In {\em Proc.\ Symp.\ on on Switching and Automata Theory (SWAT)},
  page 167–180. IEEE Computer Society, 1973.

\bibitem{mac1971categories}
S.~Mac~Lane.
\newblock {\em Categories for the Working Mathematician}, volume~5.
\newblock Springer Science \& Business Media, 1971.

\bibitem{awodey2006category}
S.~Awodey.
\newblock {\em {Category Theory}}.
\newblock Oxford University Press, 2006.

\bibitem{ehrig2006}
H.~Ehrig, K.~Ehrig, U.~Prange, and G.~Taentzer.
\newblock {\em Fundamentals of Algebraic Graph Transformation}.
\newblock Springer, 2006.

\bibitem{habel2001doublerevisited}
A.~Habel, J.~M{\"{u}}ller, and D.~Plump.
\newblock Double-pushout graph transformation revisited.
\newblock {\em Math. Struct. Comput. Sci.}, 11(5):637--688, 2001.

\bibitem{corradini2015agree}
A.~Corradini, D.~Duval, R.~Echahed, F.~Prost, and L.~Ribeiro.
\newblock {AGREE} -- algebraic graph rewriting with controlled embedding.
\newblock In {\em Proc.\ Conf.\ on Graph Transformation (ICGT)}, volume 9151 of
  {\em LNCS}, pages 35--51. Springer, 2015.

\bibitem{corradini2019rewriting}
A.~Corradini, T.~Heindel, B.~K{\"{o}}nig, D.~Nolte, and A.~Rensink.
\newblock Rewriting abstract structures: Materialization explained
  categorically.
\newblock In {\em Proc.\ Conf.\ on Foundations of Software Science and
  Computation Structures (FOSSACS)}, volume 11425 of {\em LNCS}, pages
  169--188. Springer, 2019.

\bibitem{lack2004adhesive}
S.~Lack and P.~Soboci\'{n}ski.
\newblock Adhesive categories.
\newblock In {\em Proc.\ Conf.\ on Foundations of Software Science and
  Computation Structures (FOSSACS)}, volume 2987 of {\em LNCS}, pages 273--288.
  Springer, 2004.

\bibitem{garner2012axioms}
R.~Garner and S.~Lack.
\newblock On the axioms for adhesive and quasiadhesive categories.
\newblock {\em Theory and Applications of Categories}, 27(3):27--46, 2012.

\bibitem{lowe1993algebraic}
M.~L{\"{o}}we.
\newblock Algebraic approach to single-pushout graph transformation.
\newblock {\em Theor. Comput. Sci.}, 109(1{\&}2):181--224, 1993.

\bibitem{corradini2006sesqui}
A.~Corradini, T.~Heindel, F.~Hermann, and B~K{\"{o}}nig.
\newblock Sesqui-pushout rewriting.
\newblock In {\em Proc.\ Conf.\ on Graph Transformation (ICGT)}, volume 4178 of
  {\em LNCS}, pages 30--45. Springer, 2006.

\bibitem{kahl2010amalgamating}
W.~Kahl.
\newblock Amalgamating pushout and pullback graph transformation in
  collagories.
\newblock In {\em Proc.\ Conf.\ on Graph Transformation (ICGT)}, volume 6372 of
  {\em Lecture Notes in Computer Science}, pages 362--378. Springer, 2010.

\bibitem{mantz2014phd}
F.~Mantz.
\newblock {\em Coupled Transformations of Graph Structures applied to Model
  Migration}.
\newblock PhD thesis, University of Marburg, 2014.

\bibitem{dershowitz2019drags}
N.~Dershowitz and J.{-}P. Jouannaud.
\newblock Drags: {A} compositional algebraic framework for graph rewriting.
\newblock {\em Theor. Comput. Sci.}, 777:204--231, 2019.

\bibitem{kahl2011collagories}
W.~Kahl.
\newblock Collagories: Relation-algebraic reasoning for gluing constructions.
\newblock {\em J. Log. Algebraic Methods Program.}, 80(6):297--338, 2011.

\bibitem{parisi1986graph}
F.~Parisi{-}Presicce, H.~Ehrig, and U.~Montanari.
\newblock Graph rewriting with unification and composition.
\newblock In {\em Proc.\ Workshop on Graph-Grammars and Their Application to
  Computer Science}, volume 291 of {\em LNCS}, pages 496--514. Springer, 1986.

\bibitem{habel2002relabelling}
A.~Habel and D.~Plump.
\newblock Relabelling in graph transformation.
\newblock In {\em Proc.\ Conf.\ on Graph Transformation ({ICGT})}, volume 2505
  of {\em LNCS}, pages 135--147. Springer, 2002.

\bibitem{schneider2005changing}
H.~J. Schneider.
\newblock Changing labels in the double-pushout approach can be treated
  categorically.
\newblock In {\em Formal Methods in Software and Systems Modeling}, volume 3393
  of {\em LNCS}, pages 134--149. Springer, 2005.

\bibitem{hoffman2005graph}
B.~Hoffmann.
\newblock Graph transformation with variables.
\newblock In {\em Formal Methods in Software and Systems Modeling}, volume 3393
  of {\em LNCS}, pages 101--115. Springer, 2005.

\bibitem{habel2012adhesive}
A.~Habel and D.~Plump.
\newblock $\mathcal{M}$, $\mathcal{N}$-adhesive transformation systems.
\newblock In {\em Proc.\ Conf.\ on Graph Transformation ({ICGT})}, volume 7562
  of {\em LNCS}, pages 218--233. Springer, 2012.

\bibitem{schulman2021mathoverflow}
M.~Shulman.
\newblock Subobject- and factorization-preserving typings.
\newblock MathOverflow.
\newblock URL:https://mathoverflow.net/q/381933 (version: 2021-01-22).

\end{thebibliography}

\appendix
\section*{Appendix}
\label{sec:appendix}

\section{\pbpostrong{}}

We will need both directions of the well known pullback lemma.

\medskip
\noindent
\begin{minipage}{.74\textwidth}
    \begin{lemma}[Pullback Lemma]\label{lemma:pullback:lemma}
        Consider the diagram on the right.
        Suppose the right square is a pullback square and the left square commutes. Then the outer square is a pullback square iff the left square is a pullback square. \qed
    \end{lemma}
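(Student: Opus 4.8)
The plan is to verify the universal property of a pullback directly, in both directions, using the universal property of the given right-hand square as essentially the only tool. I fix notation by writing the diagram as a top row $P \xrightarrow{u} Q \xrightarrow{v} R$ over a bottom row $X \xrightarrow{s} Y \xrightarrow{t} Z$, with vertical legs $p : P \to X$, $q : Q \to Y$, $r : R \to Z$; thus the left square asserts $qu = sp$, the right square asserts $rv = tq$, and the outer rectangle has legs $p$ and $r$ with top $vu$ and bottom $ts$. The hypothesis is that $(Q, q, v)$ is the pullback of $t$ along $r$; note in particular that this makes the pair $(q,v)$ jointly monic, a fact I will need exactly once.

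For the direction ``left square a pullback $\Rightarrow$ outer rectangle a pullback'', I would take a test cone $f : T \to X$, $g : T \to R$ with $tsf = rg$ and build the mediating map in two stages. First, $(sf, g)$ is a cone over $t$ and $r$ since $t(sf) = rg$, so the right-hand pullback gives a unique $\psi : T \to Q$ with $q\psi = sf$ and $v\psi = g$. Next, $(f, \psi)$ is a cone over the left square since $sf = q\psi$, so the left-hand pullback gives a unique $\phi : T \to P$ with $p\phi = f$ and $u\phi = \psi$; then $vu\phi = v\psi = g$, so $\phi$ mediates the outer cone. Uniqueness falls out by feeding any competitor $\phi'$ backwards through the same two universal properties: $u\phi'$ competes with $\psi$ (because $q u\phi' = sp\phi' = sf$ and $vu\phi' = g$), hence $u\phi' = \psi$, hence $\phi' = \phi$.

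For the converse, given $f : T \to X$ and $\psi : T \to Q$ with $sf = q\psi$, I would push $\psi$ forward along $v$ and observe that $(f, v\psi)$ is a cone over the outer rectangle, since $tsf = tq\psi = rv\psi$; the outer pullback then yields a unique $\phi : T \to P$ with $p\phi = f$ and $vu\phi = v\psi$. The one point that needs care — and essentially the only subtlety in the whole argument — is checking $u\phi = \psi$: this is where joint monicity of $(q,v)$ enters, via $q(u\phi) = sp\phi = sf = q\psi$ together with $v(u\phi) = v\psi$. Uniqueness of $\phi$ as a mediator for the left square reduces immediately to uniqueness for the outer rectangle, since $u\phi' = \psi$ forces $vu\phi' = v\psi$. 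Both arguments use only the universal property of the right square and commutativity of the left one, so no additional hypotheses on the ambient category are required.
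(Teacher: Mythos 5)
Your proof is correct and complete: both directions are the standard direct verification via universal properties, and you correctly isolate the one subtle point (joint monicity of the pullback projections $(q,v)$, needed to conclude $u\phi=\psi$ in the converse direction). The paper states this lemma without proof as a well-known fact, so there is no authorial argument to compare against; your argument is precisely the textbook one it implicitly invokes.
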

\end{minipage}\hfill
\begin{minipage}{.22\textwidth}
    \hfill
    \begin{tikzcd}[row sep=3mm, column sep=5mm]
    A \arrow[r] \arrow[d] & B \arrow[d] \arrow[r]                        & C \arrow[d] \\
    D \arrow[r]           & E \arrow[r] \arrow[ru, "\text{PB}", phantom] & F          
    \end{tikzcd}
\end{minipage}

\use{lem:topleft:pullback}

\begin{proof}
    In the following diagram, $u$ satisfying $t_K = u' \circ u$ and $m \circ l = g_L \circ u$ is inferred by using that $G_K$ is a pullback and commutation of the outer square.
    \begin{center}
    \begin{tikzcd}
    L \arrow[d, "m", tail] \arrow[dd, "t_L"', tail, bend right=49] & K \arrow[dd, "t_K", tail, bend left=49] \arrow[l, "l"'] \arrow[d, "u", dotted, tail] \\
    G_L \arrow[d, "\alpha"] \arrow[rd, "\text{PB}", phantom]       & G_K \arrow[l, "g_L"'] \arrow[d, "u'"]                                                \\
    L'                                                             & K' \arrow[l, "l'"']                                                                 
    \end{tikzcd}
    \end{center}
    By direction $\Longrightarrow$ of the pullback lemma (Lemma~\ref{lemma:pullback:lemma}), the created square is a pullback square, and so by stability of monos under pullbacks, $u$ is monic. \qed
\end{proof}

\use{lem:uniqueness:u}

\begin{proof}
    In the following diagram, the top-right pullback is obtained using Lemma~\ref{lem:topleft:pullback}, and the top-left pullback is a rotation of the match diagram:
    \begin{center}
      \begin{tikzpicture}[node distance=13mm,l/.style={inner sep=1mm},baseline=-7.5mm]
        \node (G) {$G_L$};
        \node (LP) [below of=G] {$L'$};
          \draw [->] (G) to node [left,l] {$\alpha$} (LP);
         \node (GKP) [right of=G] {$G_K$};
         \draw [->] (GKP) to node [above,l] {$g_L$} (G);
         \node (KP) [below of=GKP] {$K'$};
         \draw [->] (GKP) to node [right,l] {$u'$} (KP);
         \draw [->] (KP) to node [below,l] {$l'$} (LP);
         \node at ([shift={(-4mm,-4mm)}]GKP) {PB};
         \node (K) [above of=GKP] {$K$};
         \draw [>->] (K) to node [right,l] {$u$} (GKP);
         \node (L) [above of=G] {$L$};
         \draw [->] (K) to node [above,l] {$l$} (L);
         \draw [>->] (L) to node [right,l] {$m$} (G);
         \node at ([shift={(-4mm,-4mm)}]K) {PB};
         \node (L') [left of=G] {$L'$};
         \node (L2) [left of=L] {$L$};
         \draw [>->] (L) to node [above,l] {$1_L$} (L2);
         \draw [>->] (L2) to node [left,l] {$t_L$} (L');
         \draw [->] (G) to node [below,l] {$\alpha$} (L');
         \node at ([shift={(-4mm,-4mm)}]L) {PB};
         \node (K2) [above right of=K] {$K$};
         \draw [->, bend right=35] (K2) to node [above,l] {$1_L \circ l$} (L2);
         \draw [->, bend left=35] (K2) to node [right,l] {$v$} (GKP);
         \draw [->, dotted] (K2) to node [above left,l] {$!x$} (K);
      \end{tikzpicture}
    \end{center}
    By direction $\Longleftarrow$ of the pullback lemma, $L \xleftarrow{1_L \circ l} K \xrightarrow{u} G_K$ is a pullback for the topmost outer square.
    
    Now suppose that for a morphism $v : K \to G_K$, $t_K = u' \circ v$. Then
    $
    \alpha \circ g_L \circ u = l' \circ u' \circ u = l' \circ t_K = l' \circ u' \circ v = \alpha \circ g_L \circ v$.
    Hence both $v$ and $u$ make the topmost outer square commute. Hence there exists a unique $x$ such that (simplifying) $l \circ x = l$ and $u \circ x = v$. From known equalities and monicity of $t_K$ we then derive
    \[
    \begin{array}{rcl}
    u \circ x & = & v \\
    u' \circ u \circ x & = & u' \circ v \\
    t_K \circ x & =  & t_K \\
    t_K \circ x & = & t_K \circ 1_K \\
    x & = & 1_K\text{.}
    \end{array}
    \]
    Hence $u = v$. \qed
\end{proof}

\use{lem:bottomright:pushout}

\begin{proof}
    The argument is similar to the proof of Lemma~\ref{lem:topleft:pullback}, but now uses the dual statement of the pullback lemma. \qed
\end{proof}

\section{Expressiveness of \pbpostrong{}}

\use{lem:compacted:rule:lemma}
\begin{proof}
By using the following commuting diagram:
\begin{center}
  \begin{tikzpicture}[node distance=12mm,l/.style={inner sep=.5mm},baseline=-6mm]
    \node (L) {$L$};
    \node (K) [right of=L] {$K$}; \draw [->] (K) to node [above] {$l$} (L);
    \node (Lc) [below of=L] {$L_c$};
      \draw [->>] (L) to node [right] {$e$} (Lc);
    \node (Kc) [below of=K] {$K_c$};
      \draw [->] (K) to node [right] {} (Kc);
      \draw [->] (Kc) to node [below] {} (Lc);
       \node at ([shift={(-4mm,-4mm)}]K) {PB};
    \node (R) [right of=K] {$R$}; 
      \draw [->] (K) to node [above] {$r$} (R);
    \node (Rc) [below of=R] {$R_c$};
      \draw [->] (R) to node [right] {} (Rc);
      \draw [->] (Kc) to node [below] {} (Rc);
      \node at ([shift={(-4mm,4mm)}]Rc) {PO};
      
    \node (GL) [below of=Lc] {$G_L$};
    \node (GK) [below of=Kc] {$G_K$};
    \node (GR) [below of=Rc] {$G_R$};
    \node at ([shift={(-4mm,-4mm)}]GK) {PB};
    \node at ([shift={(-4mm,4mm)}]GR) {PO};
      
    \node (LP) [below of=GL] {$L'$};
      \draw [->] (GL) to node [left] {$\alpha$} (LP);
    \node (KP) [below of=GK] {$K'$};
      \draw [->] (GK) to node [right] {} (KP);
      \draw [->] (KP) to node [below] {$l'$} (LP);
       \node at ([shift={(-4mm,-4mm)}]Kc) {PB};
    \node (RP) [below of=GR] {$R'$};
      \draw [->] (KP) to node [below] {$r'$} (RP);
      \draw [->] (GR) to node [right] {} (RP);
      \node at ([shift={(-4mm,4mm)}]RP) {PO};
     
    \draw [>->] (Kc) to (GK);
    \draw [->] (Rc) to (GR);
    \draw [->] (GK) to (GL);
    \draw [->] (GK) to (GR);
    
    \draw [->, bend right=70] (L) to node [left] {$t_L$} (LP);
    \draw [>->] (Lc) to node [right] {$m'$} (GL);
    \draw [->, bend right=60] (Lc) to node [right] {$t_{L_c}$} (LP);
  \end{tikzpicture}
\end{center}
\qed
\end{proof}

\use{thm:pbpo:plus:models:pbpo}
\begin{proof}
From Corollary~\ref{corollary:pbpo:monic:matching} we obtain a set of PBPO rules $S$ that collectively model $\rho$ using monic matching, and by Lemma~\ref{lemma:pbpo:monic:typing} each $\sigma \in S$ can be modeled by a monic rule $\tau_\sigma$ with a strong matching rewrite policy. By Proposition~\ref{prop:basic:facts:pbpo:restrictions}, the set $\{ \tau_\sigma \mid \sigma \in S \}$ corresponds to a set of \pbpostrong rules. \qed
\end{proof}

\section{Category \GraphLattice{(\labels,\leq)}}

\use{lem:graph:lattice:strongly:amendable}
{
\renewcommand{\src}{\mathrm{src}}
\renewcommand{\tgt}{\mathrm{tgt}}

\newcommand{\forget}[1]{[{#1}]}

\newcommand{\gleq}{\GraphLattice{(\labels,\leq)}}

\newcommand{\lift}[3]{{#1}^{({#2}, {#3})}}

\newcommand{\incl}[1]{{#1}_\top}

\newcommand{\UGraph}{\catname{UGraph}}

\begin{proof}

Let \UGraph refer to the category of unlabeled graphs.

Given a $t_L : L \to L'$ in \gleq, momentarily forget about the labels and consider the unlabeled version (overloading names) in \UGraph. Because \UGraph is a topos, it is strongly amendable. Thus we can obtain a factorization $L \stackrel{t_L'}{\mono} L'' \stackrel{\beta}{\to} L'$ on the level of \UGraph that witnesses strong amendability, i.e., for any factorization of $L \stackrel{m}{\mono} G \stackrel{\alpha}{\to} L'$ of $t_L$ in \UGraph, there exists an $\alpha'$ such that
    \begin{center}
        \begin{tikzcd}
            L \ar[rr, bend left, "t_L"] \ar[r, tail, "m"] \ar[d, tail, "1_L"] & G \ar[r, "\alpha"] \ar[d, "\alpha'"'] & L' \\
            L \ar[r, tail, "t_L'"'] & L'' \ar[ru, "\beta"'] &
        \end{tikzcd}
    \end{center}
commutes and the left square is a pullback square.

The idea now is to lift the bottom unlabeled factorization into \gleq. As far as graph structure is concerned, we know that it is a suitable factorization candidate. Then all that needs to be verified are the order requirements $\leq$ on the labels.

For the lifting of $L''$, choose the graph in which every element has the same label as its image under $\beta$. Then clearly the lifting of $\beta$ of \UGraph into \gleq is well-defined, and so is the lifting of $t_L'$ (using that $t_L$ is well-defined in \gleq).

Now given any factorization $L \stackrel{m}{\mono} G \stackrel{\alpha}{\to} L'$ in \gleq, lift the $\alpha'$ that is obtained by considering the factorization on the level of \UGraph. Then the lifting of $\alpha'$ is well-defined by $\alpha = {\beta \circ \alpha'}$ and well-definedness of $\alpha$ and $\beta$ in \gleq. All that remains to be checked is that the left square is a pullback as far as the labels are concerned, i.e., whether for every $x \in V_L \cup E_L$, $\lbl_L(x) = {\lbl_L(1_L(x)) \land \lbl_G(m(x))}$. This follows using $\lbl_L(x) \leq \lbl_G(m(x))$ and the complete lattice law $\forall a \ b \ldotp a \leq b \implies a \land b = a$. \qed
\end{proof}

}

\end{document}